\newtheorem{thm}{\bf{Theorem}}
\newtheorem{lem}{\bf{Lemma}}
\theoremstyle{definition}
\newtheorem{defn}{Definition}
\theoremstyle{remark}
\title{Model-Aware Regularization for Deep Learning Approaches to Inverse Problems}
\author{%
  Jaweria~Amjad \\
  Department of E \& EE\\
  University College London\\
  London, United Kingdom \\
  \texttt{jaweria.amjad.16@ucl.ac.uk}
   \And
  Zhaoyan Lyu \\
  Department of E \& EE\\
  University College London\\
  London, United Kingdom \\
  \texttt{z.lyu.17@ucl.ac.uk}
   \And
   Miguel R. D.~Rodrigues \\
  Department of E \& EE\\
  University College London\\
  London, United Kingdom \\
  \texttt{m.rodrigues@ucl.ac.uk}
}
\begin{document}

\maketitle

\begin{abstract}
There are various inverse problems — including reconstruction problems arising in medical imaging — where one is often aware of the forward operator that maps variables of interest to the observations. It is therefore natural to ask whether such knowledge of the forward operator can be exploited in deep learning approaches increasingly used to solve inverse problems.

In this paper, we provide one such way via an analysis of the generalisation error of deep learning methods applicable to inverse problems. In particular, by building on the algorithmic robustness framework, we offer a generalisation error bound that encapsulates key ingredients associated with the learning problem such as the complexity of the data space, the size of the training set, the Jacobian of the deep neural network and the Jacobian of the composition of the forward operator with the neural network. We then propose a `plug-and-play' regulariser that leverages the knowledge of the forward map to improve the generalization of the network. We likewise also propose a new method allowing us to tightly upper bound the Lipschitz constants of the relevant functions that is much more computational efficient than existing ones.  We demonstrate the efficacy of our model-aware regularised deep learning algorithms against other state-of-the-art approaches on inverse problems involving various sub-sampling operators such as those used in classical compressed sensing setup and accelerated Magnetic Resonance Imaging (MRI).
\end{abstract}
\section{Introduction}
\label{sec: Intro}
In various signal and image processing challenges arising in practice -- including medical imaging, remote sensing, and many more -- one often desires to recover a number of latent variables from physical measurements. This class of problems -- generally known as \emph{inverse problems} -- can often be modelled as follows:
\begin{equation}
\label{eq: Inverse_model}
    \mathbf{y}=\mathbf{A}\mathbf{x} + \mathbf{n}
\end{equation}
where $\mathbf{y} \in \mathcal{Y} \subset \mathbb{R}^q$ represents a $q$-dimensional vector containing the physical measurements, $\mathbf{x} \in \mathcal{X} \subset \mathbb{R}^p$ represents a $p$-dimensional vector containing the variables of interest, and $\mathbf{n}$ is a bounded perturbation modelling measurement noise (i.e. $\|\mathbf{n}\|\le \eta$). The forward operator modelling the relationship between physical measurements and variables of interests is in turn modelled (in the absence of noise) using a matrix $\mathbf{A} \in \mathbb{R}^{q \times p}$. This forward operator is also often assumed to satisfy certain regularity conditions such as $\Lambda_a$-Lipschitz continuity whereby 
\footnote{Note that such forward operators encountered in various applications of interest including Magnetic resonance Imaging (MRI), Computed Tomography (CT) etc obey some form of regularity constraint such as given in (\ref{eq: REC}).}
\begin{equation}
	\label{eq: REC}
	\|\mathbf{Ax}_1-\mathbf{Ax}_2\|_2\leq \Lambda_a\|\mathbf{x}_1-\mathbf{x}_2\|_2, \qquad \forall \mathbf{x}_1,\mathbf{x}_2 \in \mathcal{X}
	\end{equation}
Two broad classes of approaches have been adopted to solve inverse problems: (i) \textit{model-based} methods and (ii) \textit{data-driven} methods. Model-based methods exploit knowledge of the forward operator and/or the signal/noise model in order to recover the variables of interest from the measurements \cite{engl1996regularization}. For example, well-known inverse problem recovery algorithms often leverage knowledge of data priors capturing stochastic \cite{mueller2012linear} or geometric structure \cite{daubechies2004iterative}.
On the other hand, data-driven methods do not leverage explicitly the knowledge of the underlying physical and data models; instead, such methods rely on the availability of various data pairs $(\mathbf{x},\mathbf{y})$ in order to learn how to invert the forward operator associated with the inverse problem \cite{lucasusing}. 
The challenge relates to the fact that these approaches -- specially deep learning ones -- typically require the availability of various training examples that are not always available in a number of applications such as medical image analysis. This inevitably hinders the applicability of data-driven approaches to inverse problems arising in various scientific and engineering use-cases char.
Another emerging class of approaches tackles inverse problems by combining data-driven with model-based methods e.g. \cite{lunz2018adversarial,jin2017deep,bora2017compressed,wu2019deep,adler2018learned}.

In this paper, our overarching goal is to understand using first-principles how to use knowledge readily available in various inverse problems in order to improve the performance of deep learning based data-driven methods. We approach this challenge by offering new generalization guarantees that capture how the generalization ability is affected by various key quantities associated with the learning problem.
Such interplay then immediately leads to an entirely new model-aware regularization strategy acting as a proxy to import knowledge about the underlying physical model onto the deep learning process. 

Concretely, our contributions can be summarized as follows:
\begin{itemize}[leftmargin=.15in]
    \vspace{-0.25em}
    \item We present generalization error bounds for DNN based inverse problem solvers. Notably, such bounds depend on various quantities including the Jacobian matrix of the neural network along with the Jacobian matrix of the composition of the neural network with the inverse problem forward map.
    \vspace{-0.25em}
    \item We then propose new regularization strategies that are capable of using knowledge about the inverse problem model during the neural network learning process via the control of the spectral and Frobenius norms of such Jacobian matrices.
        \vspace{-0.25em}
    \item We also propose computationally efficient methods to estimate  the spectral and Frobenius norms of the aforementioned Jacobian matrices in order to accelerate the neural network learning process.%
        \vspace{-0.25em}
    \item Finally, we demonstrate the empirical performance of our algorithms on various inverse problems. These include the reconstruction of high-dimensional data from low-dimensional noisy measurements where the forward model is either a compressive random Gaussian matrix or a sub-sampling matrix usually employed in accelarated Magnetic Resonance Imaging applications.
\end{itemize}
The remainder of the paper is organized as follows: 
After presenting an overview of the related research in Section \ref{sec: related_work}, we introduce our system setup in Section \ref{sec: setup}. We then present generalization bounds applicable to neural network based inverse problem solvers in Section 4, leading up to model-aware regularizers in Section \ref{sec: discussion}.%
 Section \ref{sec: experiments} offers various experimental results showcasing our model-aware deep learning approach can lead to substantial gains in relation to model-agnostic ones. Finally, concluding remarks are drawn in Section \ref{sec: conclusion}. All the proofs, details of the experimental setup and additional results are relegated to the appendices.

{\textbf{Notation}}: We use lower case boldface characters to denote vectors, upper case boldface characters to represent matrices and sets are represented by calligraphic font. For example $\mathbf{x}$ is a vector, $\mathbf{X}$ is a matrix and $\mathcal{X}$ is a set. 
$\mathcal{N}_\mathcal{D}\left(\nicefrac{\psi}{2},\rho\right)$ represents the covering number of a metric space $(\mathcal{D},\rho)$ using balls of radius $\nicefrac{\psi}{2}$.
\section{Related Work}\label{sec: related_work}
Our work connects to various directions in the literature. 

\textbf{Data-driven techniques for inverse problems}: 
Deep learning techniques, inspired by their success in classification tasks, have been applied to a large number of inverse problems such as image denoising \cite{mao2016image, zhang2017beyond}, image super-resolution \cite{dong2016image}, MRI reconstruction \cite{hyun2018deep,lundervold2019overview}, CT reconstruction \cite{park2018computed}, and many more. However, these data-driven approaches typically require rich enough datasets -- which are not always available in various domains such as medical imaging -- in order to learn how to solve the inverse problem \cite{scruggs2015harnessing}.

\textbf{Model-aware data driven approaches}: In view of the fact that the underlying physical model is known in various scenarios, there are been an increased interest in model-aware data-driven approaches to inverse problems. Some approaches leverage knowledge of the forward model to provide a rough estimate of the inverse problem solution (e.g. using some form of pseudo-inverse of the forward operator) that is then further processed using a neural network \cite{jin2017deep, kang2017deep, han2018framing}
A recent (unsupervised) approach relies on the use of adversarially learnt data dependent regularizers \cite{lunz2018adversarial}: one can then formulate optimization problems containing a data fidelity term (where knowledge of the forward model is used) and the new regularization term (where the learnt data prior is exploited) in order to recover a solution of the underlying inverse problem.  

Another approach that is becoming increasingly popular relies on algorithm unfolding or unrolling \cite{gregor2010learning, monga2019algorithm}. By starting with a typical optimization based formulation to tackle the underlying inverse problem -- where knowledge of the physical model is explicitly used -- unfolding then maps iterative solvers onto a neural network architecture whose parameters can be further tuned in a data-driven manner.

Our work departs from these contributions in the sense that -- whereas we also use a deep network to solve an inverse problem -- we leverage knowledge of the underlying forward operator model via appropriate regularization strategies deriving from a principled generalization error analysis. 

\textbf{Other related work}: 
There is a considerable volume of literature offering analysis of the generalization ability of deep neural networks demonstrating that the generalization error of highly paramterized models can be bounded in terms of certain parameter norms \cite{neyshabur2018towards, bartlett2017spectrally}. 
Overall, the majority of these bounds are applicable to classification problems rather than regression based ones. Exceptions include \cite{amjad2018deep},
but their results suffered from an exponential dependence on network depth. Our current work addresses this issue, offering a study of the generalization ability of deep neural networks based inverse problems solvers, leading to entirely new gradient based regularization strategies allowing to incorporate knowledge into the learning process.%

Various works have already proposed approaches to efficiently introduce Lipschitz regularity in deep neural networks \cite{yoshida2017spectral, virmaux2018lipschitz, hoffman2019robust}. 
However, these techniques either do not take into account the non-linearities in the network \cite{yoshida2017spectral} or are only applicable to affine transformations \cite{virmaux2018lipschitz}.
We offer an algorithm to efficiently compute the spectral norm of the network Jacobian matrix. To the best of our knowledge, this is the tightest and most efficient manner to bound a deep neural network Lipschitz constant. \section{Setup}\label{sec: setup}
Our approach to solve the inverse problem in (\ref{eq: Inverse_model}) is based on the standard supervised learning paradigm. We assume access to a training set $\cS=\{\mathbf{s}_i=(\mathbf{x}_i,\mathbf{y}_i)\}_{i\leq m}$ consisting of $m$ data points $\mathbf{s}=(\mathbf{x},\mathbf{y})$ drawn independently and identically distributed (IID) from the sample space $\mathcal{D} = \mathcal{X} \times \cY$ according to the unknown data distribution $\mu$.  We also assume that $\mathcal{X}$ and $\mathcal{Y}$ are compact metric spaces with respect to $\ell_2$ metric and that the space $\mathcal{D} = \mathcal{X} \times \mathcal{Y}$ is compact with a product \footnote{Our analysis holds for most commonly used product metrics, such as sum, sup and $r$-norm product \cite{weaver1999lipschitz}. See Appendix \ref{sec: proofs}.} metric $\rho$. 

We use such a training set to learn a hypothesis $f_\mathcal{S}: \cY \rightarrow \mathcal{X}$ mapping the measurement variables to variables of interest. We then use such a hypothesis to map new measurement variables $\mathbf{y} \in \mathcal{Y}$ to the variables of interest $\mathbf{x} \in \mathcal{X}$ that were not necessarily originally present in the training set.

We restrict our attention to mappings based on feed-forward neural networks.
Such a feed forward neural network can be represented as a composition of $d$ layer-wise mappings delivering an estimate of the variable of interest given the measurement variable as follows:
\begin{IEEEeqnarray*}{rCl}
f_\mathcal{S}(\mathbf{y})=\left(f_{\theta_d}\circ\ldots f_{\theta_1}\right)(\mathbf{y};\Theta)
\end{IEEEeqnarray*}
where $f_{\mathcal{S}} (\cdot)$ represents the feed-forward neural neural network, $f_{\theta_i} (\cdot)$ represents the $i$-th layerwise mapping parameterized by $\theta_i$, and $\Theta=\{\theta_1,\ldots\theta_d\}$ is the set of tunable parameters in the neural network. 
The parameters of the feed-forward neural network are typically tuned based on the available training set using a learning algorithm such as stochastic gradient descent \cite{goodfellow2016deep}.

One is typically interested in the performance of the learnt neural network not only on the training data but also on (previously unseen) testing data. Therefore, it is useful to quantify the generalization error associated with the learn neural network given by:
\begin{equation}\label{GE}
	GE(f_\mathcal{S})=|l_{\text{exp}}(f_\mathcal{S})-l_{\text{emp}}(f_\mathcal{S})|
\end{equation}
where $l_{\text{exp}}(f_\mathcal{S})=\mathbb{E}_{\mathbf{s}\sim\mu}[l(f_\mathcal{S},\mathbf{s})]$ represents the expected error, $l_{\text{emp}}(f_\mathcal{S}) = \frac{1}{m}\sum_i l(f_\mathcal{S} ,\mathbf{s}_i)$ represents the empirical error, and the loss function $l: \mathbb{R}^p\times\mathbb{R}^{p}\rightarrow\mathbb{R}^+_0$ --- which measures the discrepancy between the neural network prediction and the ground truth --- is taken to be the $\ell_2$ distance given by:
 \begin{IEEEeqnarray}{lCr}
 l(f_\cS,\mathbf{s})=\|f_\cS(\mathbf{y})-\mathbf{x}\|_2
 \end{IEEEeqnarray}
Our ensuing analysis offers bounds to the generalization error in \eqref{GE} of deep feed-forward neural networks based inverse problems solvers as a function of a number of relevant quantities. These quantities include the covering number of the sample space $\mathcal{D}$, the size of the training set $\mathcal{S}$, and properties of the network encapsulated in its input-output Jacobian matrix
given by:
\begin{IEEEeqnarray*}{rCl}
\label{eq: J}
\mathbf{J}(\mathbf{y}) = \begin{bmatrix}
\frac{\partial f_\mathcal{S}(\mathbf{y})_1}{\partial \textbf{y}_{1}} & \cdots& \frac{\partial f_\mathcal{S}(\mathbf{y})_1}{\partial \textbf{y}_{q}}\\
\vdots & \ddots& \vdots\\
\frac{\partial f_\mathcal{S}(\mathbf{y})_{p}}{\partial \textbf{y}_{1}} & \cdots& \frac{\partial f_\mathcal{S}(\mathbf{y})_{p}}{\partial \textbf{y}_{q}}
\end{bmatrix}
\end{IEEEeqnarray*}
Our analysis will also inform how to import knowledge about the forward-operator associated with the inverse problem onto the learning procedure.
\section{Analysis: Generalization Error Bounds}\label{sec: theory}
Our analysis builds upon the \emph{algorithmic robustness} framework in \cite{xu2012robustness}.
	\begin{defn}\label{def:robust}
		Let $\cS$ and $\mathcal{D}$ denote the training set and sample space. A learning algorithm is said to be $(K, \epsilon(\cS))$-robust if the sample space $\mathcal{D}$ can be partitioned into $K$ disjoint sets $\mathcal{K}_k$, $k = 1,\ldots ,K$,	such that for all $(\mathbf{x}_i,\mathbf{y}_i)\in \cS$ and all $(\mathbf{x},\mathbf{y})\in\mathcal{D}$
		\begin{align}
		\mathbf{s}_i, \mathbf{s}\in\mathcal{K}_k\implies \left|l(f_\mathcal{S},\mathbf{s}_i)-l(f_\mathcal{S},\mathbf{s})\right|\leq \epsilon(\cS) \label{eq:epsilon}
		\end{align}
	\end{defn}
This notion has already been used to analyse the performance of deep neural networks in \cite{sokolic2017robust,cisse2017parseval, jia2019orthogonal}. However, such analyses applicable to classification tasks do not carry over immediately to inverse problems based tasks where -- in addition to using knowledge about the forward model associated with the inverse problem -- there are some technical complications deriving from the fact that the loss functions are typically unbounded. \footnote{Existing work applies to uniformly bounded loss function (e.g. \cite{xu2012robustness,sokolic2017robust}).} 

We begin addressing these challenges by offering a simple result that showcases how the distance between the neural network estimates of the variables of interest depends on the distance between the variables of interest themselves and, importantly, the Jacobian of the network, the Jacobian of the composition of the network with the forward model associated with the inverse problem, and the noise power associated with the inverse problem.
		\begin{thm}
	 \label{thm: lipschitz_cont_end2end}
	 Consider a neural network $f_{\cS} (\cdot):\cY\rightarrow\mathcal{X}$ based solver of the inverse problem in (\ref{eq: Inverse_model}), learnt using a training set $\mathcal{S}$. 
	 Then, for any $\mathbf{s}' = (\mathbf{x}',\mathbf{y}'),~\mathbf{s}'' = (\mathbf{x}'',\mathbf{y}'')\in\mathcal{D} = \mathcal{X} \times \mathcal{Y}$, it follows that 
		\begin{IEEEeqnarray*}{rCl}
		\|{f_\mathcal{S}(\mathbf{y}'')-f_\mathcal{S}(\mathbf{y}')}\|_2\le\Lambda_{f\circ a}\|{\mathbf{x}''-\mathbf{x}'}\|_2+2\eta\Lambda_{f}
		\end{IEEEeqnarray*}
		where $\Lambda_{f\circ a}$ and $\Lambda_f$ are upper bounds to the Lipschitz constants of the neural network and the composition of the neural network and the forward operator respectively.
		\begin{IEEEeqnarray}{rCl}
		\label{eq: lipschitz_constants}
		\Lambda_{f\circ a}&=&\sup_{\mathbf{y}\in conv(\mathcal{Y})}\|{\mathbf{J}\rbr{\mathbf{y}}\mathbf{A}}\|_2\quad \Lambda_f=\sup_{\mathbf{y}\in conv(\mathcal{Y})}\|\mathbf{J}\rbr{\mathbf{y}}\|_2
		\end{IEEEeqnarray} 
	\end{thm}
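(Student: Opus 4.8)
The plan is to bound the output difference through the integral form of the mean value inequality applied along the straight segment joining the two measurements, and then to split the resulting integrand into a ``signal'' contribution routed through $\mathbf{A}$ and a ``noise'' contribution. Writing $\mathbf{y}' = \mathbf{A}\mathbf{x}' + \mathbf{n}'$ and $\mathbf{y}'' = \mathbf{A}\mathbf{x}'' + \mathbf{n}''$ with $\|\mathbf{n}'\|_2, \|\mathbf{n}''\|_2 \le \eta$, I would first record the decomposition $\mathbf{y}'' - \mathbf{y}' = \mathbf{A}(\mathbf{x}'' - \mathbf{x}') + (\mathbf{n}'' - \mathbf{n}')$. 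Parameterising the segment by $\mathbf{y}_t = (1-t)\mathbf{y}' + t\mathbf{y}''$ for $t \in [0,1]$ and invoking the fundamental theorem of calculus for the network map, I would write
\begin{IEEEeqnarray*}{rCl}
f_\mathcal{S}(\mathbf{y}'') - f_\mathcal{S}(\mathbf{y}') &=& \int_0^1 \mathbf{J}(\mathbf{y}_t)(\mathbf{y}'' - \mathbf{y}')\,dt.
\end{IEEEeqnarray*}

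Next I would substitute the decomposition into the integrand and separate the two contributions. Applying the triangle inequality for integrals together with sub-multiplicativity of the spectral norm gives
\begin{IEEEeqnarray*}{rCl}
\|f_\mathcal{S}(\mathbf{y}'') - f_\mathcal{S}(\mathbf{y}')\|_2 &\le& \int_0^1\|\mathbf{J}(\mathbf{y}_t)\mathbf{A}\|_2\,dt\,\|\mathbf{x}'' - \mathbf{x}'\|_2 + \int_0^1\|\mathbf{J}(\mathbf{y}_t)\|_2\,dt\,\|\mathbf{n}'' - \mathbf{n}'\|_2.
\end{IEEEeqnarray*}
Since $\mathbf{y}'$ and $\mathbf{y}''$ both lie in $\mathcal{Y}$, the entire segment $\{\mathbf{y}_t : t \in [0,1]\}$ lies in $conv(\mathcal{Y})$, so each integrand is bounded uniformly by the corresponding supremum in \eqref{eq: lipschitz_constants}; this is precisely why the suprema are taken over the convex hull rather than over $\mathcal{Y}$ itself, which need not be convex. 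Pulling these constants out of the integrals and using $\|\mathbf{n}'' - \mathbf{n}'\|_2 \le \|\mathbf{n}''\|_2 + \|\mathbf{n}'\|_2 \le 2\eta$ then delivers $\Lambda_{f\circ a}\|\mathbf{x}'' - \mathbf{x}'\|_2 + 2\eta\Lambda_f$, which is the claim.

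I expect the main technical obstacle to be justifying the integral representation when $f_\mathcal{S}$ employs piecewise-linear activations such as ReLU and is therefore not everywhere differentiable. I would resolve this by observing that such networks are globally Lipschitz and differentiable off a set of Lebesgue measure zero, so that $t \mapsto f_\mathcal{S}(\mathbf{y}_t)$ is absolutely continuous on $[0,1]$ and the fundamental theorem of calculus holds with $\mathbf{J}(\mathbf{y}_t)$ interpreted almost everywhere along the segment; the finitely many parameter values at which the path crosses an activation boundary contribute nothing to the integral. A secondary point to verify is that the product metric $\rho$ fixed in the setup is compatible with the $\ell_2$ norms appearing in the bound, which follows from the compactness and metric conventions already assumed, so that this estimate can subsequently be fed into the algorithmic robustness argument.
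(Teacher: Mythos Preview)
Your proposal is correct and follows essentially the same approach as the paper: the paper likewise parameterises the segment between $\mathbf{y}'$ and $\mathbf{y}''$, applies the fundamental theorem of calculus, decomposes $\mathbf{y}''-\mathbf{y}'=\mathbf{A}(\mathbf{x}''-\mathbf{x}')+(\mathbf{n}''-\mathbf{n}')$, and then bounds the two integrals by the suprema over $conv(\mathcal{Y})$ together with $\|\mathbf{n}''-\mathbf{n}'\|_2\le 2\eta$. Your discussion of the ReLU non-differentiability via absolute continuity is in fact more careful than the paper, which simply invokes a ``generalized fundamental theorem of calculus'' without further comment.
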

We now state another theorem -- building upon Theorem 1 -- articulating about the robustness of a deep neural network based solver of an inverse problem.
	\begin{thm}\label{thm: robustness}
		Consider that $\mathcal{X}$ and $\cY$ are compact spaces with respect to the $\ell_2$-metric. Consider also the sample space $\mathcal{D}=\mathcal{X}\times\cY$ equipped with a product metric $\rho$. 
It follows that a neural network trained to solve an inverse problem in (\ref{eq: Inverse_model}) based on a training set $\cS$ is
		\begin{equation*}
		\left(\cN_\cX\left(\nicefrac{\delta}{2},\ell_2\right),\left(1+\Lambda_{f\circ a}\right)\delta+2\eta\Lambda_f\right)-\text{robust}
		\end{equation*}
		for any $\delta>0$ and $\cN_\cX\left(\nicefrac{\delta}{2},\rho\right) < \infty$.
	\end{thm}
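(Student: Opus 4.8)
The plan is to verify Definition~\ref{def:robust} directly, by exhibiting an explicit partition of the sample space $\mathcal{D}$ together with the claimed uniform bound on how much the loss can vary inside each cell, invoking Theorem~\ref{thm: lipschitz_cont_end2end} to control the latter. First I would build the partition from the geometry of $\mathcal{X}$ alone. Since $\mathcal{X}$ is compact with respect to the $\ell_2$ metric, it admits a finite cover by $K := \mathcal{N}_\mathcal{X}(\delta/2,\ell_2)$ balls of radius $\delta/2$ (finiteness being exactly the assumption $\mathcal{N}_\mathcal{X}(\delta/2,\rho)<\infty$). Converting this cover into a genuine partition, by assigning each point of $\mathcal{X}$ to precisely one of the balls containing it, yields disjoint cells $\{\mathcal{C}_1,\ldots,\mathcal{C}_K\}$ of $\mathcal{X}$, each of $\ell_2$-diameter at most $\delta$, since any two points of a common radius-$\delta/2$ ball are within distance $\delta$. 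I would then lift this to the sample space by setting $\mathcal{K}_k := \mathcal{C}_k \times \mathcal{Y}$ for $k=1,\ldots,K$; these sets are disjoint, cover $\mathcal{D}=\mathcal{X}\times\mathcal{Y}$, and number exactly $K=\mathcal{N}_\mathcal{X}(\delta/2,\ell_2)$, which matches the first component of the asserted robustness pair.

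Next I would estimate the loss variation inside a single cell. Fix any $\mathbf{s}'=(\mathbf{x}',\mathbf{y}')$ and $\mathbf{s}''=(\mathbf{x}'',\mathbf{y}'')$ belonging to the same $\mathcal{K}_k$; by construction $\mathbf{x}',\mathbf{x}''\in\mathcal{C}_k$, so $\|\mathbf{x}''-\mathbf{x}'\|_2\le\delta$. Expanding the loss difference and using the reverse triangle inequality followed by the triangle inequality gives
\begin{IEEEeqnarray*}{rCl}
\left| l(f_\mathcal{S},\mathbf{s}'') - l(f_\mathcal{S},\mathbf{s}') \right| &=& \left| \|f_\mathcal{S}(\mathbf{y}'')-\mathbf{x}''\|_2 - \|f_\mathcal{S}(\mathbf{y}')-\mathbf{x}'\|_2 \right| \\
&\le& \|f_\mathcal{S}(\mathbf{y}'')-f_\mathcal{S}(\mathbf{y}')\|_2 + \|\mathbf{x}''-\mathbf{x}'\|_2 .
\end{IEEEeqnarray*}
I would then substitute the bound of Theorem~\ref{thm: lipschitz_cont_end2end} into the first term and the diameter bound $\|\mathbf{x}''-\mathbf{x}'\|_2\le\delta$ into both terms, obtaining $(1+\Lambda_{f\circ a})\|\mathbf{x}''-\mathbf{x}'\|_2 + 2\eta\Lambda_f \le (1+\Lambda_{f\circ a})\delta + 2\eta\Lambda_f$, which is precisely the $\epsilon(\mathcal{S})$ claimed. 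As this holds for every cell and every pair inside it, Definition~\ref{def:robust} is satisfied with the stated parameters.

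The conceptually delicate point — and the step carrying the real content — is the observation that the partition may be indexed by a cover of $\mathcal{X}$ only, with no multiplicative factor accounting for $\mathcal{Y}$. This departs from the standard robustness arguments for classification, where one multiplies the input-space covering number by the (discrete) label cardinality. It works here precisely because Theorem~\ref{thm: lipschitz_cont_end2end} bounds $\|f_\mathcal{S}(\mathbf{y}'')-f_\mathcal{S}(\mathbf{y}')\|_2$ solely through $\|\mathbf{x}''-\mathbf{x}'\|_2$, the $\mathbf{y}$-dependence having been absorbed into $\Lambda_{f\circ a}$ and the noise term $2\eta\Lambda_f$ via the forward model $\mathbf{y}=\mathbf{A}\mathbf{x}+\mathbf{n}$, so the cells need only control the $\mathbf{x}$-coordinate. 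A secondary technical care is that the loss is unbounded: the argument must estimate the loss \emph{difference} rather than the loss itself, and it is the reverse triangle inequality that keeps this difference finite and controllable. Checking that this manipulation is compatible with the chosen product metric $\rho$ (sum, sup, or $r$-norm product) on $\mathcal{D}$ is the remaining routine verification.
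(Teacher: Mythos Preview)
Your proof is correct and follows essentially the same approach as the paper: both build the partition of $\mathcal{D}$ from a $\delta/2$-cover of $\mathcal{X}$ and then bound the loss variation via the reverse triangle inequality together with Theorem~\ref{thm: lipschitz_cont_end2end}. The only difference is packaging---the paper routes the partition step through an auxiliary lemma bounding $\mathcal{N}_\mathcal{D}$ by $\mathcal{N}_\mathcal{X}$ for each product metric, whereas you construct the cylinder partition $\mathcal{K}_k=\mathcal{C}_k\times\mathcal{Y}$ directly, which is slightly more economical since the full product-space covering-number bound is not actually needed for the robustness conclusion.
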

We now state our main theorem relating to the generalization error of a deep neural network trained to solve an inverse problem.
\begin{thm}(GE Bound)\label{thm: ge}
	Consider again that $\mathcal{X}$ and $\cY$ are compact spaces with respect to the $\ell_2$ metric. Consider also the sample space $\mathcal{D}=\mathcal{X}\times\cY$ equipped with a product metric $\rho$. It follows that a neural network trained to solve the inverse problem in (\ref{eq: Inverse_model}) based on a training set $\cS$ consisting of $m$ i.i.d. training samples obeys with probability $1-\zeta$, for any $\zeta>0$, the $GE$ bound given by:
	\begin{IEEEeqnarray*}{rCl}
		\nonumber GE(f_\mathcal{S})
		 \le \left(1+\Lambda_{f\circ a}\right)\delta+2\eta\Lambda_{f}+M\sqrt{\frac{2\cN_\cX\left(\nicefrac{\delta}{2},\ell_2\right)\log(2)+2\log\left(\nicefrac{1}{\zeta}\right)}{m}}
	\end{IEEEeqnarray*}
	for any $\delta>0$ and $M<\infty$. 
	\end{thm}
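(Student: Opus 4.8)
The plan is to instantiate the algorithmic robustness certificate of Theorem~\ref{thm: robustness} inside a Xu--Mannor style concentration argument \cite{xu2012robustness}, while carefully circumventing the usual boundedness hypothesis on the loss. First I would fix $\delta>0$ and invoke Theorem~\ref{thm: robustness} to obtain a partition of the sample space $\mathcal{D}$ into $K=\mathcal{N}_\mathcal{X}(\nicefrac{\delta}{2},\ell_2)$ disjoint cells $\{\mathcal{K}_k\}_{k=1}^{K}$ with robustness parameter $\epsilon(\mathcal{S})=(1+\Lambda_{f\circ a})\delta+2\eta\Lambda_f$; this is exactly the term that appears verbatim on the right-hand side of the claimed bound. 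The remaining work is then to control the statistical fluctuation between the empirical and expected losses across these cells.

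The core decomposition I would use writes the expected loss as $l_{\text{exp}}(f_\mathcal{S})=\sum_{k=1}^{K}\mu(\mathcal{K}_k)\,\mathbb{E}_{\mathbf{s}\sim\mu}[l(f_\mathcal{S},\mathbf{s})\mid \mathbf{s}\in\mathcal{K}_k]$ and, denoting by $N_k$ the number of training samples landing in $\mathcal{K}_k$, regroups the $m$ terms of the empirical loss cell by cell. Inserting the intermediate quantity $\sum_k \tfrac{N_k}{m}\,\mathbb{E}[l\mid\mathcal{K}_k]$ and applying the triangle inequality splits $GE(f_\mathcal{S})$ into two pieces. The first piece, $\bigl|\sum_k (\mu(\mathcal{K}_k)-\tfrac{N_k}{m})\,\mathbb{E}[l\mid\mathcal{K}_k]\bigr|$, I would bound by $M\,\|\boldsymbol{\mu}-\tfrac{1}{m}\mathbf{N}\|_1$, where $M=\sup_{\mathbf{s}\in\mathcal{D}} l(f_\mathcal{S},\mathbf{s})$. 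The second piece, which measures the within-cell variation of the loss, is controlled directly by the robustness inequality \eqref{eq:epsilon}: every training point in $\mathcal{K}_k$ incurs a loss within $\epsilon(\mathcal{S})$ of any other point of the same cell, so this piece is at most $\epsilon(\mathcal{S})$.

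To finish, I would note that $(N_1,\dots,N_K)$ is multinomially distributed with parameters $m$ and $(\mu(\mathcal{K}_1),\dots,\mu(\mathcal{K}_K))$, and apply the Bretagnolle--Huber--Carol concentration inequality, which yields $\|\boldsymbol{\mu}-\tfrac{1}{m}\mathbf{N}\|_1\le\sqrt{(2K\log 2+2\log(\nicefrac{1}{\zeta}))/m}$ with probability at least $1-\zeta$. Substituting $K=\mathcal{N}_\mathcal{X}(\nicefrac{\delta}{2},\ell_2)$ and collecting the three contributions reproduces the stated bound, valid for every $\delta>0$.

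I expect the main obstacle to be the unbounded-loss issue flagged earlier: the classical robustness--generalization theorem requires a uniformly bounded loss, whereas $\|f_\mathcal{S}(\mathbf{y})-\mathbf{x}\|_2$ carries no a priori bound. The resolution I would rely on is that $\mathcal{X}$ and $\mathcal{Y}$ are compact and $f_\mathcal{S}$ is continuous (indeed Lipschitz on $conv(\mathcal{Y})$ with the constants of Theorem~\ref{thm: lipschitz_cont_end2end}), so the loss attains a finite supremum $M$ on $\mathcal{D}$; this is precisely what keeps the first piece of the decomposition finite and licenses the constant $M<\infty$ in the statement. A secondary point requiring care is that the partition is built from a covering of $\mathcal{X}$ rather than of the full product space $\mathcal{D}$, so I would verify that the loss-difference estimate underlying Theorem~\ref{thm: robustness} depends only on $\|\mathbf{x}''-\mathbf{x}'\|_2$ --- which it does, via the reverse triangle inequality combined with Theorem~\ref{thm: lipschitz_cont_end2end} --- ensuring that $\mathcal{N}_\mathcal{X}(\nicefrac{\delta}{2},\ell_2)$ is the correct complexity term.
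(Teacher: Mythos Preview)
Your proposal is correct and follows essentially the same route as the paper: invoke Theorem~\ref{thm: robustness} for the robustness parameters $(K,\epsilon(\mathcal{S}))$, plug these into the Xu--Mannor generalization bound (which the paper cites directly, while you reconstruct it via the Bretagnolle--Huber--Carol multinomial concentration), and dispatch the unbounded-loss issue by observing that compactness of $\mathcal{D}$ together with continuity of $f_\mathcal{S}$ forces a finite supremum $M$. The only cosmetic difference is that the paper isolates this last step as a standalone lemma showing that a differentiable map on a compact set has a bounded Jacobian (hence finite Lipschitz constant), and then invokes the Extreme Value Theorem explicitly; your sketch rolls this into a single sentence, but the substance is identical.
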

One can derive various insights from this theorem that is applicable to any differentiable feed forward neural network along with any Lipschitz continuous forward map %
: (1) first, in line with traditional bounds \cite{neyshabur2018towards,wei2019data}, the generalization error depends on the size of training set $\mathcal{S}$; (2) second, in line with more recent bounds \cite{sokolic2017robust,cisse2017parseval, amjad2018deep}, the generalization error also depends on the complexity of the data space $\mathcal{D}$;
(3) Finally, Theorem \ref{thm: ge} also reveals that the operator norm of the Jacobian of the network and the composite map also play a critical role: the lower the value of these norms, the lower the generalization error. More importantly, the proposed generalization bound is also non-vacuous in the network parameters because the network Jacobian matrix does not directly depend on the network depth. This is in sharp contrast with existing generalization bounds that typically depends on the network depth \cite{neyshabur2018towards,cisse2017parseval}
\section{ Model-Aware Jacobian Regularization}\label{sec: discussion}
Our approach to leverage knowledge about the inverse problem model onto the learning process involves regularization.
In particular, Theorem \ref{thm: ge} suggests that penalizing the spectral norm of the Jacobian of the neural network and the spectral norm of the Jacobian of the composition of the neural network with the inverse problem forward operator,
which incidentally also serve as an upper bound to the Lipschitz constants of these mappings, should improve the generalization ability of a neural network based inverse problem solver.

The use of Lipschitz regularization to improve the generalization ability of deep neural networks has already been recognized by various works \cite{novak2018sensitivity, cisse2017parseval, sokolic2017robust, hoffman2019robust}. However, the fact that introducing Lipschitz regularity in the-end to-end mapping composed of the neural network and the inverse problem forward map may also control generalization does not appear to have been acknowledged in previous works pertaining to deep learning approaches to inverse problems. 

{\bf Model-Aware Spectral Norm Based Regularization}: 
Our first regularization strategy directly penalizes the operator norm of the Jacobians for the neural network and of the composition of the neural network and the forward map.
 Training in a minibatch stochastic gradient setup, where the optimization is carried out over minibatches $\mathcal{B} = \{\textbf{s}_1, \textbf{s}_2,\ldots, \textbf{s}_{|\mathcal{B}|}\}$, leads to the following objective 
\begin{equation}
\label{eq: loss_spectral}
   \frac{1}{|\mathcal{B}|}\sum_{i=1}^{|\mathcal{B}|}l(f_\mathcal{S},\mathbf{s}_i)+\lambda_1\max_{\mathbf{s}\in\mathcal{B}}\|\mathbf{J}(\mathbf{y})\mathbf{A}\|_2+\lambda_2\max_{\mathbf{s}\in\mathcal{B}}\|\mathbf{J}(\mathbf{y})\|_2
\end{equation}
where $\lambda_1,\lambda_2$ are hyper-parameters.

{\bf Model-Aware Frobenius Norm Based Regularization:}
Our second regularization strategy stems from the fact that the Frobenius norm upper bounds the Spectral norm. Regularisation strategies that punish the Frobenius norm of the network Jacobian have been associated with significant improvement in robustness of DNN classifiers \cite{novak2018sensitivity, sokolic2017robust, hoffman2019robust}. Therefore, we also propose the following cost function
\begin{equation}
\label{eq: loss_frob}
  \frac{1}{|\mathcal{B}|}\sum_{i=1}^{|\mathcal{B}|}l(f_\mathcal{S},\mathbf{s}_i)+\lambda_1\sum_{i=1}^{|\mathcal{B}|}\|\mathbf{J}(\mathbf{y}_i)\mathbf{A}\|_F^2+\lambda_2\sum_{i=1}^{|\mathcal{B}|}\|\mathbf{J}(\mathbf{y}_i)\|_F^2
\end{equation}
Note that $\lambda_2=0$ in a noise free setting.

{\bf Efficient Computation of the Norms of the Jacobian Based Regularizers: }\begin{algorithm2e}[t]
\small
\KwIn{ Mini-batch $\cB$,number of power iterations $n$.}
\KwOut{Maximum singular value, $\sigma$, of $\mathbf{J}$.}
\For {$(\mathbf{y},\mathbf{x})\in\cB$}
{
    Initialize $\left\{\mathbf{v}^{(0)}\right\} \sim \mathcal{N}(0, \mathbf{I})$\\
    $i \leftarrow 1$\\
    \While{ $i\le n$ }
    {
        $\mathbf{u}^{(i)} \leftarrow jvp(f(\mathbf{y}),\mathbf{y},\mathbf{v}^{(i-1)})$\\
        $\mathbf{v}^{(i)} \leftarrow vjp(f(\mathbf{y}), \mathbf{y}, \mathbf{u}^{(i)})$ \\
         
        $i \leftarrow i+1$.
    }$\sigma\leftarrow\nicefrac{\|\mathbf{u}\|_2}{\|\mathbf{v}\|_2}$} 
    \caption{
        Estimation of the Spectral norm of the Jacobian matrix.
    }
    \label{alg:  spectral_j}
\end{algorithm2e} %
The challenge associated with the use of the training objectives in (\ref{eq: loss_spectral}) and (\ref{eq: loss_frob}) relates to the computation of the Spectral norm and Frobenious norm of both $\mathbf{J}$ and $\mathbf{JA}$ because computing and storing the Jacobian matrix of deep neural networks incurs huge cost. The random projection based method proposed in \cite{hoffman2019robust} -- which can be used to approximate the square of the Frobenius norm of a matrix-- can be immediately extended to approximate the regularization terms in (\ref{eq: loss_frob}), but the technique cannot be used to estimate the regularization terms in (\ref{eq: loss_spectral})

We therefore also offer a new computationally efficient method – which can be easily integrate dwith modern deep learning libraries such as Tensorflow \cite{abadi2016tensorflow} – based on the power method \cite{mises1929praktische}.
In particular let the maximum singular value and the corresponding left and right singular vectors of the matrix be denoted by $\sigma$, $\mathbf{u}$ and $\mathbf{v}$ respectively. Then  starting from a randomly initialized $\mathbf{v}^{(0)}_1\in\mathbb{R}^{q}$, the power method performs a simple recursive routine applied on a matrix $\mathbf{J}$ using a finite number of iterations $n$ ($n=3$ in our setup) as follows:
\begin{IEEEeqnarray}{rCl}
\label{eq: power_alg}
\mathbf{u}^{(i)}\leftarrow\mathbf{J}\mathbf{v}^{(i-1)},\qquad\mathbf{v}^{(i)}\leftarrow\mathbf{J}^T\mathbf{u}^{(i)}
\end{IEEEeqnarray}
This leads to an approximation of the spectral norm of the matrix as follows $\sigma=\nicefrac{\|\mathbf{u}^{(n)}\|_2}{\|\mathbf{v}^{(n)}\|_2}$. The fact that this algorithm is simple and accurate has led to its usage in regularization approaches involving the spectral norm of the weight matrices \cite{bansal2018can, yoshida2017spectral, virmaux2018lipschitz}. 

However, our regularization approach involves the relevant Jacobian matrices whose determination is both computationally and memory intense even in a low-dimensional setting. As most optimization algorithms rely on gradient based updates, modern deep learning libraries borrow tools from the field of automatic differentiation to efficiently compute the reverse mode vector Jacobian product i.e, a Jacobian matrix left multiplied by a vector -- abbreviated to ${vjp}$ -- as follows:
\begin{IEEEeqnarray}{rCl}
\label{eq: vjp}
vjp(f_\mathcal{S}(\mathbf{y}),\mathbf{y},\mathbf{d})=\mathbf{d}^T\mathbf{J}
\end{IEEEeqnarray}
where $\mathbf{d}$ is a user specified weighting vector that is set to all ones by default. We note that to be able to compute spectral norm of the Jacobian matrix, we need to iteratively compute ${vjp}$ and ${jvp}$ -- the forward mode Jacobian vector product. Although most popular ML libraries do not provide the support to calculate the $jvp$ directly, it can in fact be efficiently computed using the existing functionality \cite{townsend2017}. We leverage this technique to propose an automatic differentiation compliant algorithm which uses the power method to determine the spectral norm of the Jacobian of a differentiable function without the need to compute the actual Jacobian matrix itself. This is summarized in Algorithm \ref{alg:  spectral_j}. We provide a complexity analysis based on compute time and memory in the Appendix \ref{sec: acc_jvp}.
\section{Experiments}\label{sec: experiments}
\begin{figure}[t]
\begin{subfigure}{0.60\textwidth}
    \centering
    \includegraphics[width=0.49\linewidth]{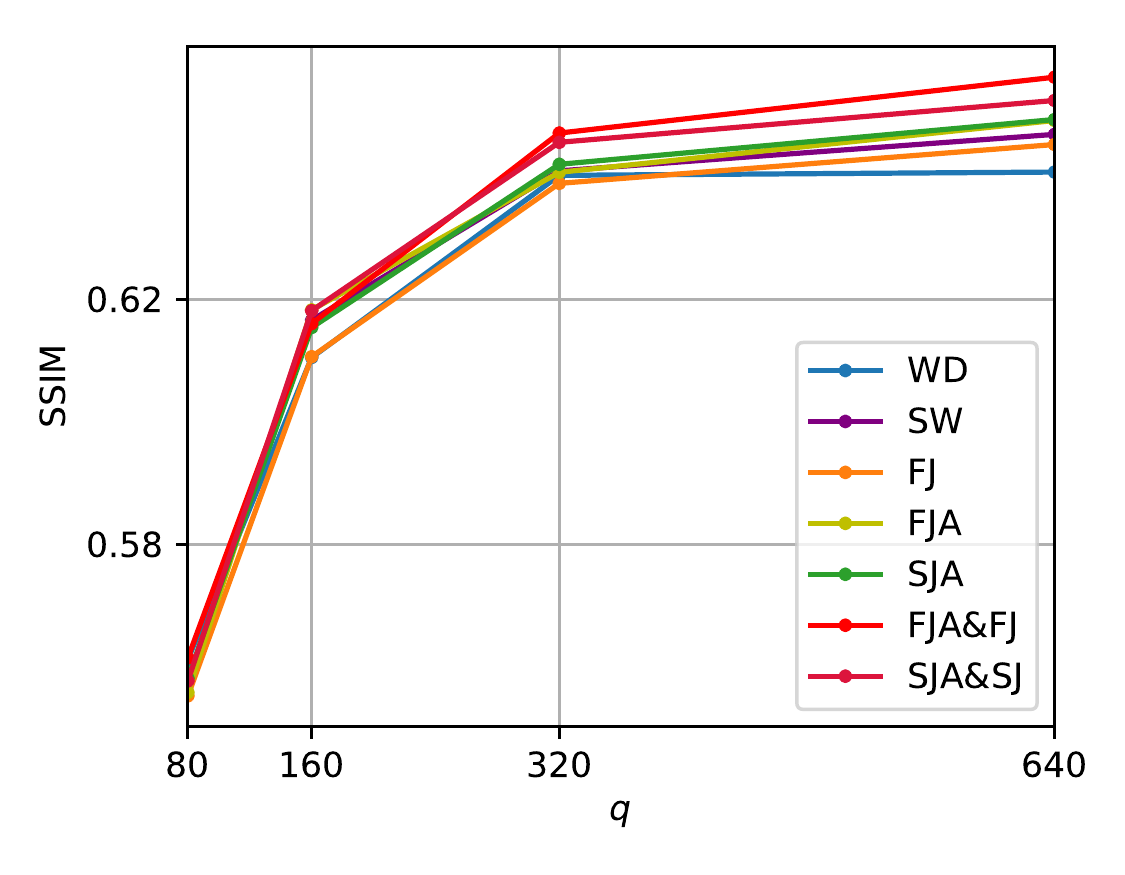}
    \includegraphics[width=0.49\linewidth]{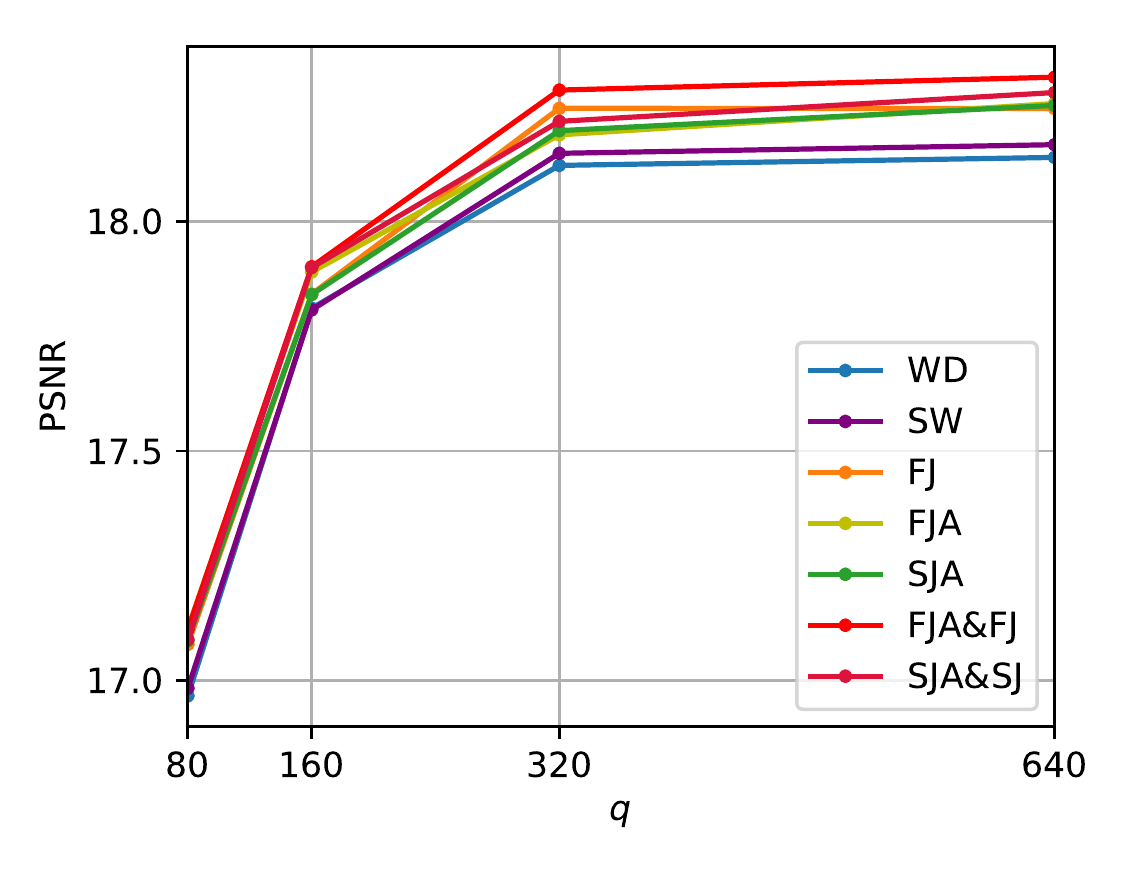}
    \vspace{-0.25em}
    \caption{
    }
\label{fig: SSIM_PSNR}
\end{subfigure}
\begin{subfigure}{0.35\textwidth}
  \centering
      \includegraphics[width=0.15\textwidth]{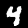}
    \includegraphics[width=0.15\textwidth]{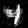}
    \includegraphics[width=0.15\textwidth]{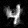}
    \includegraphics[width=0.15\textwidth]{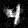}
    \includegraphics[width=0.15\textwidth]{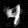}
    \includegraphics[width=0.15\textwidth]{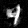} 
    \\
    \vspace{0.1em}
    \includegraphics[width=0.15\textwidth]{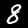}
    \includegraphics[width=0.15\textwidth]{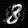}
    \includegraphics[width=0.15\textwidth]{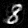}
    \includegraphics[width=0.15\textwidth]{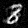}
    \includegraphics[width=0.15\textwidth]{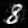}
    \includegraphics[width=0.15\textwidth]{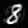}  \\
    \vspace{0.1em}
    \includegraphics[width=0.15\textwidth]{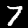}
    \includegraphics[width=0.15\textwidth]{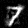}
    \includegraphics[width=0.15\textwidth]{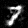}
    \includegraphics[width=0.15\textwidth]{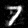}
    \includegraphics[width=0.15\textwidth]{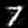}
    \includegraphics[width=0.15\textwidth]{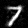}  \\
    \caption{ }
\label{fig: Vis}
\end{subfigure}
\caption{(a): Reconstruction of MNIST images given Gaussian compressive measurements using a fully connected neural network. The LHS reports SSIM versus number of Gaussian measurements and the RHS reports PSNR versus number of Gaussian measurements.(b) Sample results from the reconstruction from Gaussian measurements in a noise free setting. (Top to bottom) $q=40,80,320$. (Left to Right) ground truth, reconstruction using Adam optimizer with WD, WS, FJ, FJA, SJA regularization.
}
\label{fig:image2}
\end{figure}
In this section, we evaluate the effectiveness of the proposed Jacobian regularization on different inverse problems. 

In all the figures and tables below, we refer to regularization strategies appearing in (\ref{eq: loss_spectral}) and (\ref{eq: loss_frob}) as SJA\&SJ, FJA\&FA ($\lambda_1,\lambda_2>0$) and SJA, FJA ($\lambda_1>0, \lambda_2=0$). We provide comparisons with benchmark schemes in terms of visualizations and quality metrics such as Structural Similarity Index (SSIM) and Peak Signal to Noise Ratio (PSNR). The details of the experimental setup and results is summarized in the sequel.
\subsection{Gaussian Measurements}
\label{sec: gaussian_experiments}
{\bf Experimental Setup:} In order to show that including the knowledge of the forward map results in performance gains over model agnostic data driven methods, we consider a simple compressive sensing setting where $\mathbf{A}$ is a (wide) random Gaussian matrix. Each entry in this matrix is sampled IID from a zero mean Gaussian distribution of variance $\nicefrac{1}{q}$. The noise level $\eta$ takes values in the set $\{0,0.3\}$. To observe how regularizers behave in presence of small number of measurements, we construct multiple operators $\mathbf{A}$ with their number of rows, $q$ equal to $40$, $80$, $160$, $320$ and $640$.
The ground truth in this setting is sampled from the MNIST dataset \cite{lecun-mnisthandwrittendigit-2010}. 
We fix the train size to $500$ and normalize the training labels before applying the linear transform $\mathbf{A}$ and random noise.

For reconstruction, we use fully connected networks which consist of an input layer of size $q$ neurons, followed by three layers of width $p$. All the layers except the last one have an associated ReLU activation function. 

 {\bf Results:} 
Fig \ref{fig: SSIM_PSNR}, depicts the comparative performance of networks regularized with our Jacobian regularizers and the baseline techniques weight decay (WD), spectral norm regularization of weights (WS) \cite{miyato2018spectral}, frobenius norm regularization of Jacobian (FJ) \cite{hoffman2019robust} on various measurement lengths with the performance gains being more pronounced as the measurement size is increased.
A visual comparison of the quality of the reconstructed images in a noise free setting is presented in Fig \ref{fig: Vis}. It can be seen that the quality of images, recovered with our proposed SJA and FJA regularization is perceptually more refined for different number of measurements. 

These results support our analysis that model induced regularizers improve the performance of the DNN over model agnostic regularization translating into reconstructions with better SSIM and PSNR.
\subsection{k-space subsampled measurements}
\begin{figure}[t]
    \centering
    \vspace{0.1em}    \includegraphics[width=0.13\textwidth]{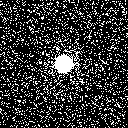} 
    \includegraphics[width=0.13\textwidth]{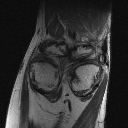} 
    \includegraphics[width=0.13\textwidth]{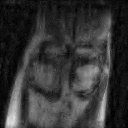}
     \includegraphics[width=0.13\textwidth]{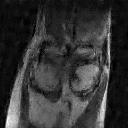} 
    \includegraphics[width=0.13\textwidth]{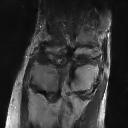} 
    \includegraphics[width=0.13\textwidth]{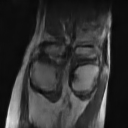}
      \includegraphics[width=0.13\textwidth]{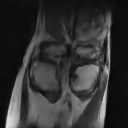}
    \\
    \vspace{0.1em}
    \includegraphics[width=0.13\textwidth]{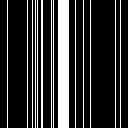}
   \includegraphics[width=0.13\textwidth]{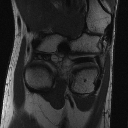}
   \includegraphics[width=0.13\textwidth]{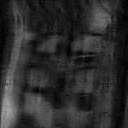}
    \includegraphics[width=0.13\textwidth]{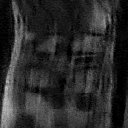} 
   \includegraphics[width=0.13\textwidth]{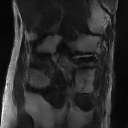}
   \includegraphics[width=0.13\textwidth]{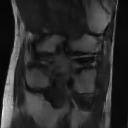}
   \includegraphics[width=0.13\textwidth]{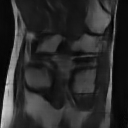}
    \\
    \vspace{0.2em}
    \caption{
        Sample results from the reconstruction from k-space subsampled measurements for different types of acquisition masks.(Left to Right) sampling mask, ground truth, reconstruction using $\ell_1$ Wavelet regularization, Adversarial Regularizer \cite{lunz2018adversarial}, postprocessing using UNet \cite{jin2017deep}, UNet with FJA\&FJ and UNet with SJA\&SJ. 
    }
    \label{fig: mri_recon}
    \vspace{-1em}
\end{figure}
\label{sec: subsampled_experiments}
{\bf{Experimental Setup:}} For our next set of experiments, we consider 
the frequency domain sub-sampling operator appearing in MR imaging. It can be mathematically represented as
$\mathbf{A}=\mathbf{F}^{-1}\mathbf{M}\mathbf{F}$. Here $\mathbf{F}$ and $\mathbf{F}^{-1}$ are the 2D Fourier and inverse Fourier transform matrices. The mask $\mathbf{M}$ is diagonal matrix containing binary entries on its diagonal where the fraction of non-zero entries signify the subsampling ratio $s$. 

We generate the training and validation set by retrospectively under-sampling the Fourier transform of the ground truth images, obtained from the NYU fastMRI's knee database \cite{zbontar2018fastmri}. The subsampling is achieved by the Cartesian 1D and 2D random sampling masks in k-sapce, retaining only $25\%$ and $20\%$ of the total Fourier samples respectively. We normalize the images to the range $[0,1]$ before applying the forward transform\footnote{Additional details of the preprocessing, network architecture and the training routine will be included in the Appendix \ref{sec: experiments_apndx}.} and fix the noise level to $\eta=5$.

These noisy subsampled measurements are fed to UNet architecture \cite{ronneberger2015u} for reconstruction which we train using Adam optimizer for $300$ epochs. We use a training set size of $500$ and a minibatch of size $5$. All our results are achieved by applying the regularization at only a small fraction ($10\%$) of steps per epoch. The regularization coefficients $\lambda_1$ and $\lambda_2$ are tuned in an adaptive manner during training for which the details are provided in the Appendix \ref{sec: experiments_apndx}.

{\bf Results:} 
\begin{table}[t]
\scriptsize
    \centering
    \caption{\small
        Summary of MRI image reconstruction using UNet.
    }
    \vspace{0.5em}
    \begin{tabular}{r c rr c rr c rr}
        \toprule
                && \multicolumn{2}{c}{2D mask ($s=0.2$)} && \multicolumn{2}{c}{1D mask ($s=0.25$)}\\
                \cmidrule{3-4} \cmidrule{6-7}
                && PSNR & SSIM && PSNR & SSIM\\
        \midrule
        Adversarial Regularizer &&  29.89 &   0.77 &&    25.44 &   0.54\\
        UNet &&  29.10 &   0.76 &&    28.81 &   0.74\\
        UNet as postprocessing &&  30.01 &    0.79 && 28.36    &    0.74 \\
        UNet w FJA\&FJ &&   30.80 &    0.80 &&    28.96 &    0.75 \\
        UNet w SJA\&SJ &&   30.89 &    0.81 &&      29.30 &    0.78\\
        \bottomrule
    \end{tabular}
    \label{tab: mri}
\end{table}
In this setting, we compare the reconstruction results achhieved through the proposed FJA\&FJ and SJA\&SJ regularization with the $\ell_1$-regularized wavelet reconstruction, Adversarial Regualarization method introduced in \cite{lunz2018adversarial} and the UNet postprocessing method \cite{jin2017deep}. For a fair comparison, the UNet architecture and training routines are kept the same for our work and the postprocessing method. 
For the Adversarial regularization method, we used the implementation provided by the authors on Github and kept the batch size and training routine unchanged. For both, Adversarial Regularization and postprocessing method, we use the reconstruction of $\ell_1$ regularized method as input.

A qualitative comparison of the performance for various experimental settings is provided in Table \ref{tab: mri}. The proposed regularizers consistently outbeat all the other methods in terms of PSNR and SSIM. The performance gains are more pronounced for 1D sampling mask which introduces aliasing artifacts in the measurement. 
A visual inspection of the achieved reconstructions is provided in Figure \ref{fig: mri_recon}. A close inspection of the reconstructed images reveals that the proposed method introduces less artifacts than the other reconstructions.
\section{Conclusion}\label{sec: conclusion}
This paper -- leveraging knowledge of underlying physical models -- proposes a new deep learning approach to solve inverse problems. The crux of the approach -- stemming directly from a rigorous generalization error analysis -- is a new neural network learning procedure involving the use of cost functions in capturing knowledge of the underlying inverse problem model via appropriate regularization. This regularizer, owing to its plug-and-play nature can be integrated into any deep learning based solver of inverse problems without extra hassle. Empirical results on a variety of problems have shown that our proposed regularization approach can outperform considerably standard model agnostic regularizers and reconstruction schemes specialized for inverse problems. This work adds to recent ones by showing there is much value incorporating model knowledge onto data-driven approaches. 
\section{Broader Impact}
The outstanding performance offered by deep neural networks to long-standing problems has encouraged its use in a myriad of applications. However, deep neural networks -- often called \emph{black boxes} -- are poorly understood, leading to predictions that are often not interpretable or explainable.
While in certain application fields this issue may play a secondary role, in high-risk domains, e.g., healthcare, it is crucial to use machine learning models that are trustworthy.

In this work, we take a step in this direction by providing a framework for explaining the various factors that affect the performance of a deep neural network on a well-known class of problems: inverse problems. This is an important class of problems that arises in various scientific and engineering applications including imaging techniques widely used in healthcare. We offer a principled methodology offering the means to train more robust deep neural network models.

However, our methodology also has a cost. It relies on regularization techniques resulting in additional training time hence an increased carbon footprint. This can nonetheless be partially mitigated by using the proposed strategy on a limited number of training steps as apposed to each training step. 
\bibliographystyle{unsrt}
\bibliography{main}
\setcounter{equation}{10}
\setcounter{figure}{2}
\setcounter{algocf}{1}

\newpage
\appendix
\appendixpage
\section{Proofs}
\label{sec: proofs}
\begin{proof}[Proof of Theorem \ref{thm: lipschitz_cont_end2end}]
We first note that the line between $\mathbf{y}_1 = \mathbf{A} \mathbf{x}_1 + \mathbf{n}_1$ and $\mathbf{y}_2 = \mathbf{A} \mathbf{x}_2 + \mathbf{n}_2$ is given by $
\bar{\theta}\mathbf{y_1}+\theta\mathbf{y}_2$ where $\theta\in(0,1)$ and $\bar{\theta}=1-\theta$. Let us now define a function $h(\theta)$ as follows:
\begin{IEEEeqnarray*}{rCl}
h(\theta)&=&f_\cS(\bar{\theta}\mathbf{y_1}+\theta\mathbf{y}_2) = f_\mathcal{S}\rbr{\mathbf{A}\rbr{\bar{\theta}\mathbf{x}_1+\theta\mathbf{x}_2}+\bar{\theta}\mathbf{n}_1+\theta\mathbf{n}_2}    
\end{IEEEeqnarray*}
By the generalized fundamental theorem of calculus, it can be shown that: 
\begin{IEEEeqnarray*}{rCl}
f_\mathcal{S}(\mathbf{y}_2)-f_\mathcal{S}(\mathbf{y}_1)=\int_{0}^{1}\frac{d h (\theta)}{d\theta}d\theta
\end{IEEEeqnarray*}
where
	\begin{IEEEeqnarray*}{rCl}
	\frac{d}{d\theta}(h(\theta))=\mathbf{J}\rbr{\bar{\theta}\mathbf{y_1}+\theta\mathbf{y}_2}\sbr{\mathbf{A}\rbr{\mathbf{x}_2-\mathbf{x}_1}+\rbr{\mathbf{n}_2-\mathbf{n}_1}}
	\end{IEEEeqnarray*}
Then, from the sub-multiplicative property of matrix norms, it is immediate to show that:
	\begin{IEEEeqnarray*}{rCl}
	&&|{f_\mathcal{S}\rbr{\mathbf{y}_2}-f_\mathcal{S}\rbr{\mathbf{y}_1}}\|_2\\
	&=&\left\|\int_{0}^{1}\mathbf{J}\rbr{\bar{\theta}\mathbf{y_1}+\theta\mathbf{y}_2}\sbr{\mathbf{A}\rbr{\mathbf{x}_2-\mathbf{x}_1}+\rbr{\mathbf{n}_2-\mathbf{n}_1}}d\theta\right\|_2\\
	&\le&\left\|\int_{0}^{1}\mathbf{J}\rbr{\bar{\theta}\mathbf{y_1}+\theta\mathbf{y}_2}\mathbf{A}({\mathbf{x}_2-\mathbf{x}_1})d\theta\right\|_2+\left\|{\int_{0}^{1}\mathbf{J}\rbr{\bar{\theta}\mathbf{y_1}+\theta\mathbf{y}_2}({\mathbf{n}_2-\mathbf{n}_1})d\theta}\right\|_2\\
	&\le&\left\|\int_{0}^{1}\mathbf{J}\rbr{\bar{\theta}\mathbf{y_1}+\theta\mathbf{y}_2}\mathbf{A}d\theta\right\|_2\|{\mathbf{x}_2-\mathbf{x}_1}\|_2+\left\|{\int_{0}^{1}\mathbf{J}\rbr{\bar{\theta}\mathbf{y_1}+\theta\mathbf{y}_2}d\theta}\right\|_2\|{\mathbf{n}_2-\mathbf{n}_1}\|_2
	\end{IEEEeqnarray*}  

It is also possible to show that:
	\begin{IEEEeqnarray*}{rCl}
\left\|\int_{0}^{1}\mathbf{J}\rbr{\bar{\theta}\mathbf{y_1}+\theta\mathbf{y}_2}\mathbf{A}d\theta\right\|_2&\stackrel{(a)}{\le}& \int_{0}^{1}\|\mathbf{J}\rbr{\bar{\theta}\mathbf{y_1}+\theta\mathbf{y}_2}\mathbf{A}\|_2d\theta\\
	&{\le}& \sup_{\stackrel{\mathbf{y}_1,\mathbf{y}_2\in\mathcal{Y}} {\theta\in[0,1]}}\|\mathbf{J}\rbr{\bar{\theta}\mathbf{y_1}+\theta\mathbf{y}_2}\mathbf{A}\|_2
	\end{IEEEeqnarray*}

Therefore, given that $\bar{\theta}\mathbf{y}_1+\theta\mathbf{y}_2$ is in convex-hull of $\mathcal{Y}$ for $\theta\in[0,1]$, it follows immediately that:
\begin{IEEEeqnarray}{rCl}
\label{eq: Jacobian_conv_}
&&\nonumber \|{f_\mathcal{S}\rbr{\mathbf{y}_2}-f_\mathcal{S}\rbr{\mathbf{y}_1}}\|_2\\
&\le&\sup_{\mathbf{y}\in conv(\mathcal{Y})}\|{\mathbf{J}\rbr{\mathbf{y}}\mathbf{A}}\|_2\|{\mathbf{x}_2-\mathbf{x}_1}\|_2+\sup_{\mathbf{y}\in conv(\mathcal{Y}}\|{\mathbf{J}\rbr{\mathbf{y}}}\|_2\|{\mathbf{n}_2-\mathbf{n}_1}\|_2\nonumber\\
&\le&\sup_{\mathbf{y}\in conv(\mathcal{Y})}\|{\mathbf{J}\rbr{\mathbf{y}}\mathbf{A}}\|_2\|{\mathbf{x}_2-\mathbf{x}_1}\|_2+2\eta\sup_{\mathbf{y}\in conv(\mathcal{Y})}\|{\mathbf{J}\rbr{\mathbf{y}}}\|_2
\end{IEEEeqnarray}
where $conv(\mathcal{Y})$ represents the convex hull of $\mathcal{Y}$.
\end{proof}
	\begin{proof}[Proof of Theorem \ref{thm: robustness}]
We first prove the following Lemma showing that for a Lipschitz continuous linear mapping $\mathbf{A}$, it is possible to upper bound the covering number of $\mathcal{D}$ via the covering number of $\mathcal{X}$. 
\begin{lem}
    Let $\mathbf{A}\in\mathbb{R}^{p \times q}$ be a $\Lambda_a$-Lipschitz continuous linear map from the compact metric space $(\mathcal{X},\ell_2)$ to $(\mathcal{Y},\ell_2)$ defined in (\ref{eq: Inverse_model}). Then, assuming that the product space $\mathcal{D}=\mathcal{X}\times\mathcal{Y}$ is compact with respect to
\begin{enumerate}
     \item[i - ]  $r$-norm product metric $\rho=\left(\|\mathbf{y}-\mathbf{y}'\|^r_2+\|\mathbf{x}-\mathbf{x}'\|_2^r\right)^{\nicefrac{1}{r}}$
    \begin{IEEEeqnarray*}{rCl}
    \mathcal{N}_\mathcal{D}(\left(\delta^r+(\Lambda_{a}\delta+\eta)^r\right)^{\nicefrac{1}{r}},\rho)\leq\mathcal{N}_\mathcal{X}(\nicefrac{\delta}{2},\ell_2)
    \end{IEEEeqnarray*}
    \item[ii - ]  sup product metric $\rho=\max(\|\mathbf{y}-\mathbf{y}'\|_2,\|\mathbf{x}-\mathbf{x}'\|_2)$
    \begin{IEEEeqnarray*}{rCl}
    \mathcal{N}_\mathcal{D}(\max(\delta,\Lambda_{a}\delta+\eta),\rho)\leq\mathcal{N}_\mathcal{X}(\nicefrac{\delta}{2},\ell_2)
    \end{IEEEeqnarray*}
     \item[iii - ] sum product $\rho=\|\mathbf{y}-\mathbf{y}'\|_2+\|\mathbf{x}-\mathbf{x}'\|_2$
     \begin{IEEEeqnarray*}{rCl}
    \mathcal{N}_\mathcal{D}((1+\Lambda_a)\delta+\eta,\rho)\leq\mathcal{N}_\mathcal{X}(\nicefrac{\delta}{2},\ell_2)
    \end{IEEEeqnarray*}
\end{enumerate}
\label{lem: product_sp_covnumber}
\end{lem}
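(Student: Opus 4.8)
The plan is to manufacture a cover of the product space $\mathcal{D}$ directly from an optimal cover of $\mathcal{X}$, exploiting the fact that every admissible point $(\mathbf{x},\mathbf{y})\in\mathcal{D}$ satisfies $\mathbf{y}=\mathbf{A}\mathbf{x}+\mathbf{n}$ with $\|\mathbf{n}\|_2\le\eta$, so that the $\mathbf{y}$-coordinate is pinned down by the $\mathbf{x}$-coordinate up to the Lipschitz constant $\Lambda_a$ and the noise radius $\eta$. Concretely, I would start from a minimal $\nicefrac{\delta}{2}$-cover of $(\mathcal{X},\ell_2)$ consisting of $N=\mathcal{N}_\mathcal{X}(\nicefrac{\delta}{2},\ell_2)$ balls $B_1,\dots,B_N$ with centers $\mathbf{c}_1,\dots,\mathbf{c}_N$, discarding balls that miss $\mathcal{X}$ and picking a representative $\mathbf{x}_i\in B_i\cap\mathcal{X}$ for each surviving ball.

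The crucial construction is the choice of centers for the induced cover of $\mathcal{D}$: I would take the \emph{noise-free} points $\mathbf{d}_i=(\mathbf{x}_i,\mathbf{A}\mathbf{x}_i)$ rather than any observed (noisy) sample. Then, for an arbitrary $(\mathbf{x},\mathbf{y})\in\mathcal{D}$, I pick the index $j$ with $\mathbf{x}\in B_j$; the triangle inequality inside the radius-$\nicefrac{\delta}{2}$ ball yields $\|\mathbf{x}-\mathbf{x}_j\|_2\le\delta$, and Lipschitz continuity of $\mathbf{A}$ together with $\|\mathbf{n}\|_2\le\eta$ gives $\|\mathbf{y}-\mathbf{A}\mathbf{x}_j\|_2=\|\mathbf{A}(\mathbf{x}-\mathbf{x}_j)+\mathbf{n}\|_2\le\Lambda_a\delta+\eta$. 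These two one-dimensional coordinate bounds are the entire analytic content of the estimate.

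With the two coordinate bounds in hand, I would feed them into each of the three product metrics to read off the radii: the $r$-norm product gives $\rho(\cdot,\mathbf{d}_j)\le(\delta^r+(\Lambda_a\delta+\eta)^r)^{1/r}$, the sup product gives $\max(\delta,\Lambda_a\delta+\eta)$, and the sum product gives $(1+\Lambda_a)\delta+\eta$, each matching the stated claim. Since these (at most) $N$ balls cover $\mathcal{D}$, the covering number of $\mathcal{D}$ at the corresponding radius is bounded by $N=\mathcal{N}_\mathcal{X}(\nicefrac{\delta}{2},\ell_2)$, which is exactly the three asserted inequalities.

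The step I expect to be delicate is obtaining the additive $\eta$ rather than $2\eta$ in the $\mathbf{y}$-coordinate: if one naively centered the $\mathcal{D}$-cover at an observed noisy pair, the two noise vectors would combine into $\|\mathbf{n}-\mathbf{n}_j\|_2\le 2\eta$, loosening every bound. Centering at the noiseless measurement $\mathbf{A}\mathbf{x}_j$ (a legitimate external-cover center even though it need not itself lie in the noisy set $\mathcal{Y}$) is what removes the extra factor. A secondary technical point to verify is that the representatives $\mathbf{x}_j$ lie in $\mathcal{X}$ so that $\mathbf{A}\mathbf{x}_j$ is well defined and the bound $\|\mathbf{x}-\mathbf{x}_j\|_2\le\delta$ (the sum of two radii, not a single radius) is the one that is tight; and that the product metric $\rho$ is monotone in each coordinate distance, which holds for all three forms, so the single argument handles the three cases simultaneously.
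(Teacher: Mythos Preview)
Your proposal is correct and follows essentially the same route as the paper: build cover centers for $\mathcal{D}$ of the form $(\mathbf{x}',\mathbf{A}\mathbf{x}')$ from a $\nicefrac{\delta}{2}$-cover of $\mathcal{X}$, bound the two coordinate distances by $\delta$ and $\Lambda_a\delta+\eta$ respectively, and then read off the radius in each of the three product metrics. Your discussion of choosing noise-free centers to obtain $\eta$ rather than $2\eta$, and of passing from external cover centers $\mathbf{c}_i$ to internal representatives $\mathbf{x}_i\in\mathcal{X}$ so that the doubled radius $\delta$ (not $\nicefrac{\delta}{2}$) appears, makes explicit two points that the paper's argument leaves implicit.
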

\begin{proof}	
We will first show that if the set $\mathcal{X}'$  is a $\delta$-cover of the set $\mathcal{X}$ with respect to the $\ell_2$ metric then the set $\mathcal{Y}' = \{\mathbf{y}' = \mathbf{A}\mathbf{x}', \mathbf{x}'\in\mathcal{X}'\}$ is a $(\Lambda_a\delta+\eta)$-cover of $\mathcal{Y}$ with respect to the $\ell_2$ metric. In particular, in view of the fact that since $\forall \mathbf{x}\in\mathcal{X}$, $\exists\mathbf{x}'\in\mathcal{X}'$ such that $\|\mathbf{x}-\mathbf{x}'\|_2\leq{\delta}$ then $\forall \mathbf{y}\in\mathcal{Y}$, $\exists\mathbf{y}'\in\mathcal{Y}'$ such that:
	\begin{IEEEeqnarray*}{rCl}
		\|\mathbf{y}-\mathbf{y}'\|_2&\le&\|\mathbf{Ax}+\mathbf{n}-\mathbf{A}\mathbf{x}'\|_2\\
		&\leq&\|\mathbf{Ax}-\mathbf{A}\mathbf{x}'\|_2+\|\mathbf{n}\|_2\\
		&\leq& \Lambda_a\|\mathbf{x}-\mathbf{x}'\|_2+\|\mathbf{n}\|_2\\
		&\leq& \Lambda_a\delta+\eta
	\end{IEEEeqnarray*}
We now show that it is possible to construct a cover
\begin{IEEEeqnarray*}{rCl}
	\label{eq: D'}
	\mathcal{D}'=\{\mathbf{s}'=(\mathbf{x}',\mathbf{y}'): \mathbf{x}'\in\mathcal{X}', \mathbf{y}'=\mathbf{A}\mathbf{x}' \in \mathcal{Y}'\} {\subseteq} \mathcal{X}' \times \mathcal{Y}'
\end{IEEEeqnarray*}
of
\begin{IEEEeqnarray*}{rCl}
	\label{eq: D}
	\mathcal{D}=\{\mathbf{s}=(\mathbf{x},\mathbf{y}): \mathbf{x}\in\mathcal{X}, \mathbf{y}=\mathbf{Ax}+\mathbf{n} \in \mathcal{Y}\} {\subseteq} \mathcal{X} \times \mathcal{Y}
\end{IEEEeqnarray*}
with metric balls of radius less than or equal (i) $\left(\delta^r+(\Lambda_{a}\delta+\eta)^r\right)^{\nicefrac{1}{r}}$ (for $r$-norm product metric); (ii) $\max(\delta,\Lambda_{a}\delta+\eta)$ (for sup product metric); and (iii) $(1+\Lambda_{a})\delta+\eta$ (for sum product metric). Concretely,
\begin{enumerate}
    \item[i -- ] For $r$-norm product metric $\rho=\left(\|\mathbf{y}-\mathbf{y}'\|^r_2+\|\mathbf{x}-\mathbf{x}'\|_2^r\right)^{\nicefrac{1}{r}}$, $\forall \mathbf{s} = (\mathbf{x},\mathbf{y})) \in \mathcal{D}, \exists \mathbf{s}'=(\mathbf{x}',\mathbf{y}'=\mathbf{A} \mathbf{x}')) \in \mathcal{D}'$ such that
    \begin{IEEEeqnarray*}{rCl}
		\rho(\mathbf{s},\mathbf{s}')&=&\left(\|\mathbf{x}-\mathbf{x}'\|^r_2+\|\mathbf{y}-\mathbf{y}'\|^r_2\right)^{\nicefrac{1}{r}}\\
		&{\leq} &\left(\delta^r+(\Lambda_{a}\delta+\eta)^r\right)^{\nicefrac{1}{r}}
	\end{IEEEeqnarray*}		
    \item[ii -- ]  For sup product metric, $\rho=\max(\|\mathbf{y}-\mathbf{y}'\|_2,\|\mathbf{x}-\mathbf{x}'\|_2)$, $\forall \mathbf{s} = (\mathbf{x},\mathbf{y})) \in \mathcal{D}, \exists \mathbf{s}'=(\mathbf{x}',\mathbf{y}'=\mathbf{A} \mathbf{x}')) \in \mathcal{D}'$ such that
      \begin{IEEEeqnarray*}{rCl}
		\rho(\mathbf{s},\mathbf{s}')&=&\max(\|\mathbf{x}-\mathbf{x}'\|_2,\|\mathbf{y}-\mathbf{y}'\|_2)\\
		&{\leq}&\max(\delta,\Lambda_{a}\delta+\eta)
	\end{IEEEeqnarray*}	
    \item[iii -- ] For sum product metric, $\rho=\|\mathbf{y}-\mathbf{y}'\|_2+\|\mathbf{x}-\mathbf{x}'\|_2$, $\forall \mathbf{s} = (\mathbf{x},\mathbf{y})) \in \mathcal{D}, \exists \mathbf{s}'=(\mathbf{x}',\mathbf{y}')) \in \mathcal{D}'$ such that 
	  \begin{IEEEeqnarray*}{rCl}
		\rho(\mathbf{s},\mathbf{s}')&=&\left(\|\mathbf{x}-\mathbf{x}'\|_2+\|\mathbf{y}-\mathbf{y}'\|_2\right)\\
		&{\leq}&(1+\Lambda_{a})\delta+\eta
	\end{IEEEeqnarray*}		
\end{enumerate}

Therefore, in presence of a $\Lambda_a$-lipschitz mapping, a $\delta$-cover $\mathcal{X}'$ of $\mathcal{X}$ induces a $\rho$-cover $\mathcal{D}'$ over the product space $\mathcal{D}$ such that the cardinality of the set $\mathcal{D}'$ is equal to the cardinality of the set $\mathcal{X}'$ -- thus proving the lemma.
\end{proof}	
We are now in a position to prove the Theorem. 
 We first note that
	\begin{IEEEeqnarray}{rCl}\label{eq : Lip_cont_reg_loss}
	\nonumber|l(f_\mathcal{S},\mathbf{s}_2)-l(f_\mathcal{S},\mathbf{s}_1)|
	&=&\big|\|\mathbf{x}_2-f_\mathcal{S}(\mathbf{y}_2)\|_2-\|\mathbf{x}_1-f_\mathcal{S}(\mathbf{y}_1)\|_2\big|\nonumber\\
	&\stackrel{(a)}{\leq}& \|\mathbf{x}_2-f_\mathcal{S}(\mathbf{y}_2)-\mathbf{x}_1+f_\mathcal{S}(\mathbf{y}_1)\|_2\nonumber\\
	&\stackrel{(b)}{\leq}& \|\mathbf{x}_2-\mathbf{x}_1\|_2+\|f_\mathcal{S}(\mathbf{y}_2)-f_\mathcal{S}(\mathbf{y}_1)\|_2\nonumber\\
	&\stackrel{(c)}{\leq}& \left(1+\Lambda_{f\circ a}\right)\|\mathbf{x}_2-\mathbf{x}_1\|_2+2\Lambda_f\eta
	\end{IEEEeqnarray}
	The inequalities $(a)$ and $(b)$ hold due to reverse triangle inequality and Minkowski-inequality. The inequality in $(c)$ is established due to Theorem \ref{thm: lipschitz_cont_end2end}. 

	{ We also note from Lemma 1 that we can partition the set $\mathcal{D}$ onto at most $K=\cN_\cX(\nicefrac{\delta}{2},\ell_2)$ (disjoint)  partitions such that if $(\mathbf{x}_1,\mathbf{y}_1) \in \mathcal{D}$ and $(\mathbf{x}_2,\mathbf{y}_2) \in \mathcal{D}$ are within the same partition then
	\begin{IEEEeqnarray}{rCl}
	\|\mathbf{x}_1-\mathbf{x}_2 \|_2 \leq \delta
	\end{IEEEeqnarray}
	hence
	\begin{IEEEeqnarray}{rCl}
	|l(f_\mathcal{S},\mathbf{s}_1))-l(f_\mathcal{S},\mathbf{s}_2)|&\le& \left(1+\Lambda_{f\circ a}\right)\delta+2\eta\Lambda_{f}
	\end{IEEEeqnarray}
	The Theorem then follows immediately from Definition 1.
	} 
	\end{proof}
	\begin{proof}[Proof of Theorem \ref{thm: ge}]

We first establish a simple Lemma.

\begin{lem}
\label{lem: bounded_Lip}
The Lipschitz constant of a differentiable function $f$ on a compact set $\mathcal{Z}$ is bounded.
\end{lem}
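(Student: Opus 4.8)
The plan is to identify the Lipschitz constant of $f$ with the supremum of the operator norm of its Jacobian over the domain, and then invoke compactness to conclude that this supremum is finite. First I would recall the standard characterization: for a continuously differentiable $f$ on a convex domain, applying the fundamental theorem of calculus along the segment joining any two points $\mathbf{z}_1,\mathbf{z}_2$ — exactly the device used in the proof of Theorem \ref{thm: lipschitz_cont_end2end} — yields
\begin{IEEEeqnarray*}{rCl}
\|f(\mathbf{z}_1)-f(\mathbf{z}_2)\|_2 \le \left(\sup_{\mathbf{z}\in conv(\mathcal{Z})}\|\mathbf{J}(\mathbf{z})\|_2\right)\|\mathbf{z}_1-\mathbf{z}_2\|_2 .
\end{IEEEeqnarray*}
Hence the Lipschitz constant is at most $L:=\sup_{\mathbf{z}}\|\mathbf{J}(\mathbf{z})\|_2$, and it suffices to prove $L<\infty$.

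Next I would argue that $L$ is finite by combining continuity of the Jacobian with compactness of the domain. Since $f$ is (continuously) differentiable, the map $\mathbf{z}\mapsto\mathbf{J}(\mathbf{z})$ is continuous, and composing it with the operator norm $\|\cdot\|_2$ — itself a continuous functional — produces a continuous real-valued map $\mathbf{z}\mapsto\|\mathbf{J}(\mathbf{z})\|_2$ on $\mathcal{Z}$. By the Weierstrass extreme value theorem, a continuous function on a compact set attains its maximum, so $L$ is achieved at some $\mathbf{z}^\star$ and is therefore finite. If one works over $conv(\mathcal{Z})$, as the segment argument requires, I would additionally note that the convex hull of a compact subset of $\mathbb{R}^n$ is again compact, so the same extreme value argument applies verbatim; this also makes the suprema over $conv(\mathcal{Y})$ in \eqref{eq: lipschitz_constants} well defined and finite.

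The main obstacle is the gap between \emph{differentiable} and \emph{continuously differentiable}: mere pointwise differentiability does not guarantee a continuous — let alone bounded — derivative on a compact set, so the clean reduction above can fail without further hypotheses. I would close this gap by making explicit the continuity of $\mathbf{J}$ (the $C^1$ regularity implicit in the feed-forward networks considered here), which is precisely the ingredient that the compactness/extreme-value step consumes. A secondary technical point is that the segment joining two points of $\mathcal{Z}$ need not remain inside $\mathcal{Z}$; this is handled uniformly by passing to $conv(\mathcal{Z})$, consistent with the convex-hull suprema already appearing throughout the preceding results.
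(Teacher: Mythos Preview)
Your argument follows the same skeleton as the paper's own proof: both reduce the Lipschitz constant to $\sup_{\mathbf{z}\in conv(\mathcal{Z})}\|\mathbf{J}(\mathbf{z})\|_2$ via the fundamental theorem of calculus along segments, observe that $conv(\mathcal{Z})$ is again compact, and then finish with an extreme value argument. The only substantive difference is in that last step: the paper bounds each partial derivative $\partial f_i/\partial z_j$ individually by some $\lambda_0$ on the compact set and then converts to a spectral-norm bound through the equivalence $\|\mathbf{J}\|_2\le c\|\mathbf{J}\|_\infty\le cp\lambda_0$, whereas you apply the extreme value theorem directly to the continuous scalar map $\mathbf{z}\mapsto\|\mathbf{J}(\mathbf{z})\|_2$. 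Your route is slightly cleaner and avoids the norm-equivalence detour; the paper's route gives an explicit (if crude) numerical bound. Your explicit acknowledgement that the argument actually consumes $C^1$ regularity rather than mere differentiability is a point the paper glosses over, so on that front your treatment is more careful.
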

\begin{proof}
Let $f:\mathbb{R}^p\rightarrow\mathbb{R}^q$ be a differentiable function, defined on a compact set $\mathcal{Z}\subseteq\mathbb{R}^p$. Let also $g(\theta) = f(\mathbf{z}' + \theta(\mathbf{z}''- \mathbf{z}'))$, for some $\theta\in[0,1]$, so that $g(0)=f(\mathbf{z}')$ and $g(1)=f(\mathbf{z}'')$ where $\mathbf{z}',\mathbf{z}''$ are any two fixed points. Then, by the fundamental theorem of calculus, we have
\begin{IEEEeqnarray*}{rCl}
f(\mathbf{z}')-f(\mathbf{z}'') = {\int_{0}^{1}\mathbf{J}(\mathbf{z}' + \theta(\mathbf{z}''- \mathbf{z}'))d\theta}({\mathbf{z}'-\mathbf{z}''})
\end{IEEEeqnarray*}
where $\mathbf{J}(\mathbf{z})$ is the Jacobian matrix of $f$ at $\mathbf{z}$.

From the multiplicative property of norms, we also have that
\begin{IEEEeqnarray*}{rCl}
\|f(\mathbf{z}')-f(\mathbf{z}'')\|
&\le&\left\|{\int_{0}^{1}\mathbf{J}(\mathbf{z}' + \theta(\mathbf{z}''- \mathbf{z}'))d\theta}\right\|_2\|{\mathbf{z}'-\mathbf{z}''}\|_2
\end{IEEEeqnarray*}

Next, by the triangle inequality for integrals, it can be shown that
\begin{IEEEeqnarray*}{rCl}
\left\|{\int_{0}^{1}\mathbf{J}(\mathbf{z}' + \theta(\mathbf{z}''- \mathbf{z}'))d\theta}\right\|_2
&\le& \sup_{\stackrel{\mathbf{z}',\mathbf{z}''\in\mathcal{Z}} {\theta\in[0,1]}}\|\mathbf{J}(\mathbf{z}' + \theta(\mathbf{z}''- \mathbf{z}'))\|_2\\
&\le&\sup_{\mathbf{z}\in conv(\mathcal{Z})}\|\mathbf{J}(\mathbf{z})\|_2
\end{IEEEeqnarray*}
where $conv(\mathcal{Z})$ represents the convex hull of the compact set $\mathcal{Z}$. Note that the Carath\'eodory’s theorem of convex hulls can be used to prove that the convex hull of compact set in a finite dimensional space $\mathbb{R}^p$ is also compact \cite{gallier2008notes}.

Next, for a continuous function $f$ defined on a compact set, there exists a finite $\lambda_0$ such that
 \cite{bolzano1930functionenlehre, frechet1904generalisation}.
\begin{IEEEeqnarray}{rCl}
\label{eq: bounded_derivative}
\left|\frac{\partial}{d z_j}(f(\mathbf{z})_i) \right|\le \lambda_0
\end{IEEEeqnarray}
where $\frac{\partial}{d z_j}(f(\mathbf{z})_i)$ is the element at row $(i,j)$-th element of the Jacobian matrix $\mathbf{J}$. This, then leads to the following 
\begin{IEEEeqnarray*}{rCl}
\sup_{\mathbf{z}\in conv(\mathcal{Z})}\|\mathbf{J}(\mathbf{z})\|_2
\stackrel{(a)}{\le}\sup_{\mathbf{z}\in conv(\mathcal{Z})}c\|\mathbf{J}(\mathbf{z})\|_\infty\stackrel{(b)}{\le}cp\lambda_0
\end{IEEEeqnarray*}
where $(a)$ is due to the equivalence of matrix norms and $c$ is a constant dependent on the dimensions of the Jacobian matrix \cite{tonge2000equivalence}. Finally the last inequality follows form the definition of the $\|.\|_\infty$ matrix norm \cite{lewis2010top}. 
\end{proof}

We are now in a position to prove the Theorem. In particular, it can be shown that the $GE$ of a $(K,\epsilon(\mathcal{S}))$-robust deep neural network, with probability greater than $1-\zeta$, obeys \cite{xu2012robustness}
	\begin{IEEEeqnarray}{rCl}
	\label{eq: GE_Xu}
	GE
	& \le& \epsilon(\mathcal{S})+\max_\mathbf{s}|l(f_\mathcal{S},\mathbf{s})|\sqrt{\frac{2K\log(2)+2\log(1/\zeta)}{m}}
	\end{IEEEeqnarray}

We can immediately use the robustness result in Theorem 2 to determine two quantities this generalization error bound: $\epsilon(\mathcal{S})$ and $K$. However -- in contrast with existing results that assume that the loss function is uniformly bounded so that $\max_\mathbf{s}|l(f_\mathcal{S},\mathbf{s})|\le M<\infty$ (e.g. see \cite{xu2012robustness}) -- the loss function associated with our inverse problem is not necessarily bounded. However, it is still possible to show that $\max_\mathbf{s}|l(f_\mathcal{S},\mathbf{s})|$ is finite.

In particular, let us observe that $\forall~\mathbf{s}=(\mathbf{x},\mathbf{y}),\mathbf{s}'=(\mathbf{x}',\mathbf{y}')\in\mathcal{D}$ 
	\begin{IEEEeqnarray*}{rCl}
|l(f_\mathcal{S},\mathbf{s})-l(f_\mathcal{S},\mathbf{s}')|&=&
\big|\|\mathbf{x}-f_\mathcal{S}(\mathbf{y})\|_2-\|\mathbf{x}'-f_\mathcal{S}(\mathbf{y}')\|_2\big|\nonumber\\
	&{\leq}& \|\mathbf{x}-\mathbf{x}'\|_2+\|f_\mathcal{S}(\mathbf{y})-f_\mathcal{S}(\mathbf{y}')\|_2\\
&\stackrel{(a)}{\le}& \|\mathbf{x}-\mathbf{x}'\|_2+\Lambda_{f}\|\mathbf{y}-\mathbf{y}'\|\\
&\stackrel{(b)}{\le}& (1+\Lambda_{f})\rho(\mathbf{s},\mathbf{s}')
\end{IEEEeqnarray*}
where $(a)$ is due to Corollary 2 in \cite{sokolic2017robust} and $(b)$ holds because product space metric upper bounds the metrics on constituent metric spaces $\mathcal{X}$ and $\mathcal{Y}$ \footnote{This is true for most product metrics such as $\sup$, sum and $r$-norm product metric considered in Lemma \ref{lem: product_sp_covnumber}.}.

Let us also observe that -- due to Lemma \ref{lem: bounded_Lip} -- the Lipschitz constant of the loss function is finite because the Lipschitz constant of the neural network $\Lambda_f$ is also finite.

This immediately implies that the loss function is Lipschitz continuous hence continuous, and -- by the Extreme Value theorem \cite{frechet1904generalisation} -- that it is also bounded on $\cD$, so that $\max_\mathbf{s}|l(f_\mathcal{S},\mathbf{s})|\le M<\infty$.

The Theorem then follows immediately from Theorem 2.

\end{proof}

\section{Algorithms}
\subsection{Computation of the $jvp$}
The steps for efficiently computing the Jacobian vector product, $jvp$, are given in Algorithm \ref{alg: jvp} \cite{townsend2017}.
\begin{algorithm2e}[h]
\small
\KwIn{ Mini-batch $\cB$, model outputs $f(\mathbf{y})$, vector $\mathbf{v}$.}
\KwOut{$\mathbf{Jv}$}
        Initialize a dummy tensor $\mathbf{d}$.\\
        $\mathbf{g}\leftarrow vjp(f(\mathbf{y}), \mathbf{y}, \mathbf{d})$\\
        $\mathbf{u}_1\leftarrow vjp(\mathbf{g}, \mathbf{p}, \mathbf{v})$\\
        \Return $\mathbf{u}$
    \caption{
      	Computation of the $jvp$.
    }
    \label{alg: jvp}
\end{algorithm2e}
\subsection{Square of the Frobenius norm of Jacobian terms}
In \cite{hoffman2019robust}, random projections have been leveraged to efficiently compute the square of the Frobenius norm of the Jacobian matrix of a neural network. This algorithm can be immediately specialized to approximate the square of the Frobenius norm of $\mathbf{JA}$ too.

We report for the sake of completeness the original algorithm in Alg. \ref{alg: frob_j}. 
\begin{algorithm2e}[ht]
\small
\KwIn{ Mini-batch $\cB$, number of projections $n$.}
\KwOut{Square of the Frobenius norm of the Jacobian $\mathcal{J}_F$.}
$\mathcal{J}_F \leftarrow 0$\\
\For {$(\mathbf{y},\mathbf{x})\in\cB$}
{
$i \leftarrow 0$\\
\While{ $i < n$ }
{ Initialize $\left\{\mathbf{z}\right\} \sim \mathcal{N}(0, \mathbf{I})$\\ 
${\mathbf{z}} \leftarrow \mathbf{z}/\vert\vert\mathbf{z}\vert\vert$ \\
$\mathcal{J}_F \leftarrow \mathcal{J}_F\mathrel{+} p \|vjp(f(\mathbf{y}),\mathbf{y},\mathbf{z})\|_2^2 / (n|\cB|)$
} }
    \caption{
            Estimation of the $\|\mathbf{J}\|_F^2$
    }
    \label{alg: frob_j}
\end{algorithm2e}
We now report the modified algorithm to compute $\|\mathbf{JA}\|_F^2$ in Alg. \ref{alg: frob_ja}.
\begin{algorithm2e}[ht]
\small
\KwIn{ Mini-batch $\cB$,number of projections $n$.}
\KwOut{Square of the Frobenius norm of the Jacobian $\mathcal{A}_F$.}
$\mathcal{A}_F \leftarrow 0$\\
\For {$(\mathbf{y},\mathbf{x})\in\cB$}
{
$i \leftarrow 0$\\
\While{ $i < n$ }
{ Initialize $\left\{\mathbf{z}\right\} \sim \mathcal{N}(0, \mathbf{I})$\\ 
${\mathbf{z}} \leftarrow \mathbf{z}/\vert\vert\mathbf{z}\vert\vert$ \\
$\mathcal{A}_F \leftarrow \mathcal{A}_F\mathrel{+} p \|vjp(f(\mathbf{y}),\mathbf{y},\mathbf{z})\cdot \mathbf{A}\|_2^2 / (n|\cB|)$
} }
    \caption{
            Estimation of the $\|\mathbf{JA}\|_F^2$
    }
    \label{alg: frob_ja}
\end{algorithm2e}

\subsection{Spectral norm of $\mathbf{J}$ and \texorpdfstring{$\mathbf{JA}$}{JA}}
The objective function specified in Section \ref{sec: discussion}, eq. (\ref{eq: loss_spectral}) of the main manuscript required spectral norm of both $\mathbf{J}$ and $\mathbf{JA}$ to compute the regularized loss. The algorithm for the calculation of the spectral norm of Jacobian has been included in the main manuscript. The spectral norm of Jacobian times the forward map $\mathbf{A}$ can be computed easily by modifying Algorithm \ref{alg:  spectral_j} as shown in Algorithm \ref{alg: spectral_ja}.

\begin{algorithm2e}[ht]
\small
\KwIn{ Mini-batch $\cB$,number of power iterations $n$.}
\KwOut{Maximum singular value $\sigma$ of the $\mathbf{JA}$ matrix.}
\For {$(\mathbf{y},\mathbf{x})\in\cB$}
{
    Initialize $\left\{\mathbf{v}\right\} \sim \mathcal{N}(0, \mathbf{I})$\\
    $i \leftarrow 0$\\
    \While{ $i < n$ }
    {
        $\mathbf{u} \leftarrow jvp(f(\mathbf{y}),\mathbf{y},\mathbf{Av})$\\
        $\mathbf{v} \leftarrow vjp(f(\mathbf{y}), \mathbf{y}, \mathbf{A}^T\mathbf{u})$ \\
         
        $i \leftarrow i+1$.
    }$\sigma\leftarrow\|\mathbf{u}\|_2/\|\mathbf{v}\|_2$} 
    \caption{
            Esitmation of the spectral norm of $\mathbf{JA}$
    }
    \label{alg: spectral_ja}
\end{algorithm2e}
\subsubsection{Accuracy \& Efficiency} \label{sec: acc_jvp}
\begin{table}[h]
\scriptsize
    \centering
    \caption{\small
        Time and memory requirements for training a 4-layer fully connected NN and 5-layer DnCNN \cite{zhang2017beyond} on the full training set of Fashion MNIST with a batch size of 100 and $p=q=784$. 
    }
    \vspace{0.5em}
    \begin{tabular}{r c rr c rr}
        \toprule
                && \multicolumn{2}{c}{4-layer FC NN} & \multicolumn{2}{c}{5-layer DnCNN }\\
                \cmidrule{3-4} \cmidrule{5-6}
                && time & memory & time & memory\\
        \midrule
        Vanilla &&  $29$m &   $595$Mb &    $1$h$8$m &   $1057$Mb\\
        \cite{hoffman2019robust} &&  $2$h$23$m &   $659$Mb &    $6$h,$10$m &   $1363$Mb\\
        Alg. \ref{alg:  spectral_j} ($n=1$) &&  $47.5$m &   $659$Mb &    $3$h,$42$m &   $1825$Mb\\
    Alg. \ref{alg:  spectral_j} ($n=2$) &&   $1$h,$1$m &    $787$Mb &    $5$h,$4$m &    $2849$Mb\\
        Alg. \ref{alg:  spectral_j} ($n=3$) &&   $1$h,$13$m &    $787$Mb &      $6$h,$31$m &    $4897$Mb\\
        Alg. \ref{alg:  spectral_j} ($n=4$) &&   $1$h,$18$m &    $787$Mb &      $9$h,$42$m &    $4897$Mb\\
        tf batch J ($n=3$) &&   $63$h,$7$m &    $4659$Mb &    $--$ &    $\approx160$Gb \\
        \bottomrule
    \end{tabular}
    \label{tab: jac_compute}
\end{table}
\begin{figure}[h]
    \centering
    \includegraphics[width=0.65\linewidth]{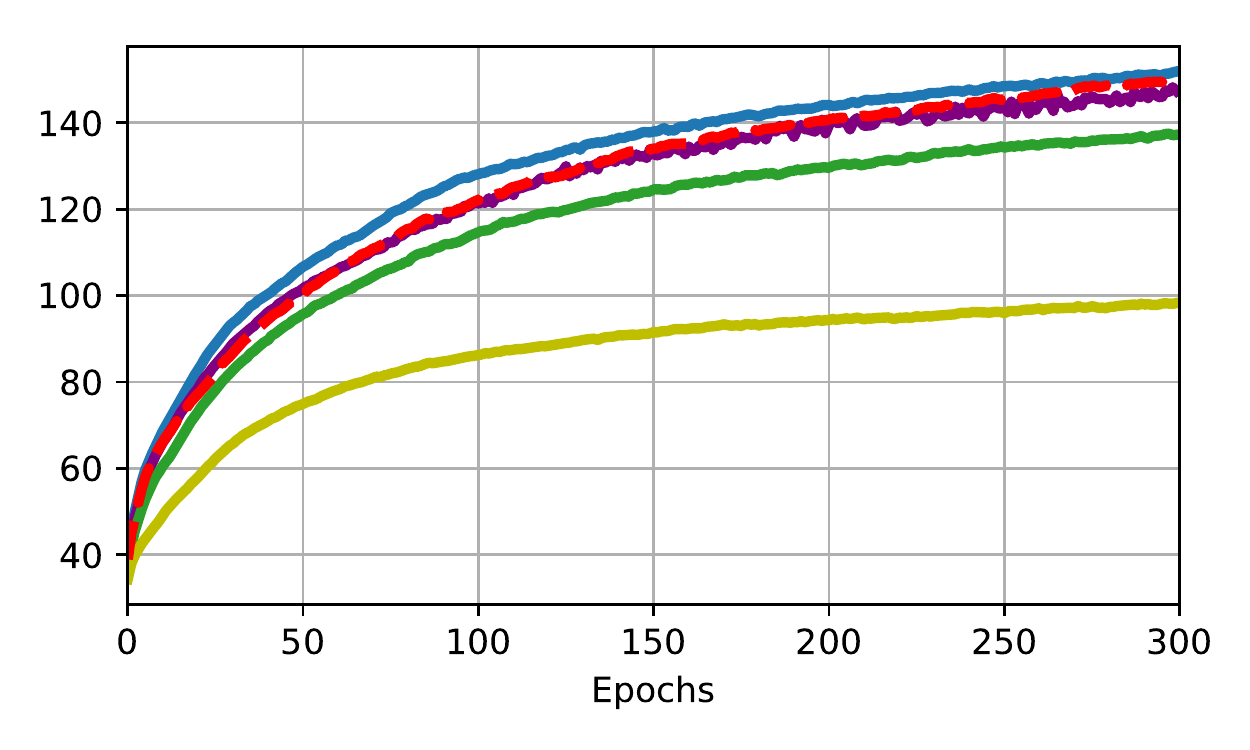}\\
    \includegraphics[width=1.\linewidth]{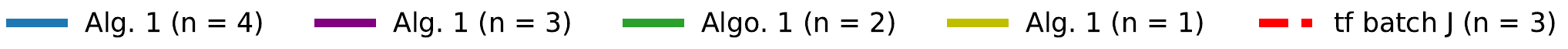}
    \caption{    Maximum singular values of the batch Jacobians for a 4-layer fully connected network with $p=q=784$.
    }
    \vspace{-1em}
    \label{fig:singular-values}
\end{figure}
We validate that Algo. \ref{alg:  spectral_j} indeed results in a faithful estimate of the spectral norm of the Jacobian. To do so, we compare the output of our algorithm with the output of the power method applied to a Jacobian matrix computed using Tensorflow while training a fully connected network. We plot the resulting outputs of both methods as a function of the number of epochs in Fig \ref{fig:singular-values}. It can be seen that for equal number of power iterations $(n=3)$, the results obtained using both methods are almost identical.

Next,we look at the computational resources required for regularizing different models using Alg. \ref{alg:  spectral_j} and compare to the resource consumption of other Jacobian based regularization algorithms. 
In Table \ref{tab: jac_compute}, we offer a comparative analysis in terms of time and memory requirement of training different models on MNIST using Adam with and without explicit Jacobian regularization schemes for a batch size, $|\mathcal{B}|=100$. We use a Tesla-V100 16 GB for this test. 

For both fullly connected and convolutional neural networks, our analysis shows that regularizing the network using Alg. \ref{alg:  spectral_j}, in contrast to regularizing with Jacobian regularization proposed in \cite{hoffman2019robust} offers gains of orders of magnitudes in terms of computation requirements. On the other hand regularizing a neural network by first computing the Jacobian and then calculating the spectral norm is practically impossible even for a modestly sized fully connected network. For convolutional neural networks, even a minbatch Jacobian of $10$ samples occupies $16$GB of memory making it infeasible to computes $\|\mathbf{J}\|_2$ at all.

In its current state, Alg. \ref{alg:  spectral_j} requires one forward pass to compute $jvp$ and two forward passes to compute $vjp$. This computational overhead can be reduced if support for forward mode jacobian vector products is incorporated in modern machine learning frameworks. Apart from the application discussed in this work, the $jvp$s play a pivotal role in a number of other techniques such as calculating Hessian vector product, which in turn can be used for second order optimization \cite{martens2010deep}. New frameworks with the specialized support for advanced automatic differentiation have been introduced that may result in reduced costs associated with our algorithm but they are still in the research and development stage \cite{jax2018github}.

{ Note that the for loop in Alg. \ref{alg:  spectral_j}, \ref{alg: frob_j}, \ref{alg: frob_ja} and \ref{alg:  spectral_ja} is only for expositional purposes. All the modern machine learning frameworks can implement these algorithms over the whole batch in a parallel fashion.}

\section{Experiments}\label{sec: experiments_apndx}
\subsection{Gaussian Measurements}
For Gaussian measurements, we used a $4$-layer fully connected neural network for reconstruction. We fine tuned the regularization parameters $\lambda_1$, $\lambda_2$ for our regularizers and all the competing methods (WD, WS, FJ) using a grid search.

Additional reconstruction results from the Gaussian measurements of the ground truth image in Figure \ref{fig: gaussian_gt_3} using varius regularizers is given in Figure \ref{fig: gaussian_recon_3}.
\subsection{k-space subsampled measurements}
For $k$-space subsampled measurements, we used the UNet architecture  shown in Figure \ref{fig: unet} \cite{ronneberger2015u}. 

For this set of experiments, we compared the reconstruction performance of the proposed Jacobian regularization methods with the adversarial regularizer \cite{lunz2018adversarial} and postprocessing via UNet method \cite{jin2017deep}.

Both the postprocessing and the adversarial regularization method involve a `preprocessing' step. That is, both techniques require a classical regularized reconstruction method, $\mathbf{A}^\dagger(.)$, which encapsulates the knowledge of the forward model to be applied to the measurement $\mathbf{y}$. This $\mathbf{A}^\dagger(\mathbf{y})$ is then used as input to these reconstruction algorithms. For our experimental setting, we applied the $\ell_1$ wavelet regularized reconstruction method to the subsampled measurements.

We modified the official implementation of the adversarial regularizer, present on Github, provided by the authors of the publication to suit the forward model used in this work. The batch size and other hyperparameters such as the step size and the choice of the adversarial regularizer network were kept the same as in the original implementation. The authors provided a closed form technique to compute the hyperparamter $\lambda$ used in their algorithm. We therefore used the sugggested technique and performed no further fine-tuning.   

For the proposed Jacobian regularization method, an adaptive policy -- which took the feedback from training into account -- was used to tune the hyperparamters $\lambda_1$ and $\lambda_2$ as opposed to keeping them fixed. Our empirical results show that using such an adaptive technique results in better validation performance. Since this technique takes into account the training performance to compute the regularization coefficients at each step and does not rely on a hit and trial method to find the `best' hyperparamter, it results in lesser overall training time for the model. The steps for computing the adaptive regularization parameter are summarized in Algorithm \ref{alg: adaptive_lambda}.
\begin{algorithm2e}[ht]
\small
\KwIn{magnitude $r$ of the regularization term and $l$ of the loss over Mini-batch $\cB$, scaling factor $s$ }
\KwOut{Value of the regularization coefficient}
$\alpha \leftarrow \text{floor}(\log(\nicefrac{l}{r}))$ \tcp*{$l$ is the unregularized empirical loss $\nicefrac{1}{|\mathcal{B}|}\sum_il(f_\cS,\mathbf{s}_i)$}
$\lambda \leftarrow \nicefrac{\alpha}{s}$ \tcp*{The values of $10,20$ and $30$ were tested for $s$. $20$ usually gave the best results.}

    \caption{
            Estimation of the regularization coefficient $\lambda$ for Jacobian regularizer.
    }
    \label{alg: adaptive_lambda}
\end{algorithm2e}
Figures \ref{fig: mri_recon2Dmask} and \ref{fig: mri_recon1Dmask} present additional results for the MRI experiments for the acquisition masks in Figure \ref{fig: 2DMask} and \ref{fig: 1Dmask}. Since the Jacobian regularization method can be directly used with any deep learning based reconstruction method, we also include resconstruction results when a postprocessing UNet is regularized via SJA\&SJ regularizer. Perceptually the reconstruction achieved through this method outperforms all the other techniques. However there is no improvement in terms of PSNR and SSIM over the UNet with SJA\&SJ (without the preprocessing).
\begin{figure}[h]
    \centering
      \includegraphics[width=0.15\textwidth]{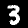}   \\
    \vspace{0.2em}
    \caption{
       Grount Truth image corresponding to the reconstruction in Figure \ref{fig: gaussian_recon_3}. 
    }
    \label{fig: gaussian_gt_3}
    \vspace{-1em}
\end{figure}
\begin{figure}[h]
  \centering
        \includegraphics[width=0.15\textwidth]{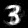} 
      \includegraphics[width=0.15\textwidth]{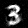}
      \includegraphics[width=0.15\textwidth]{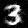}
      \includegraphics[width=0.15\textwidth]{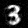}
      \includegraphics[width=0.15\textwidth]{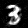} 
    \\
    \vspace{0.1em}    
     \includegraphics[width=0.15\textwidth]{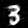} 
     \includegraphics[width=0.15\textwidth]{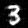}
     \includegraphics[width=0.15\textwidth]{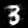}
     \includegraphics[width=0.15\textwidth]{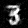}
     \includegraphics[width=0.15\textwidth]{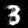} 
    \\
    \vspace{0.1em}    
     \vspace{0.1em}    
     \includegraphics[width=0.15\textwidth]{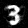} 
     \includegraphics[width=0.15\textwidth]{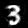}
     \includegraphics[width=0.15\textwidth]{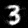}
     \includegraphics[width=0.15\textwidth]{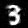}
     \includegraphics[width=0.15\textwidth]{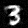} 
      \\
    \vspace{0.1em}    
     \vspace{0.1em}    
     \includegraphics[width=0.15\textwidth]{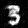} 
     \includegraphics[width=0.15\textwidth]{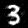}
     \includegraphics[width=0.15\textwidth]{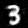}
     \includegraphics[width=0.15\textwidth]{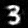}
     \includegraphics[width=0.15\textwidth]{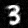} 
    \\
    \vspace{0.2em}
    \caption{
        Sample results from the reconstruction from compressed Guassian measurements for the ground truth image in Figure \ref{fig: gaussian_gt_3}. (Top to Bottom) Reconstruction from $80,160,320$ and $640$ measuremenst. (Left to Right) reconstruction using a 4-layer FC neural network regularized with WD, SW, FJ, FJA\&FJ and SJA\&SJ. 
    }
    \label{fig: gaussian_recon_3}
    \vspace{-1em}
\end{figure}
\begin{figure}[t]
    \centering
    \includegraphics[width=0.75\linewidth]{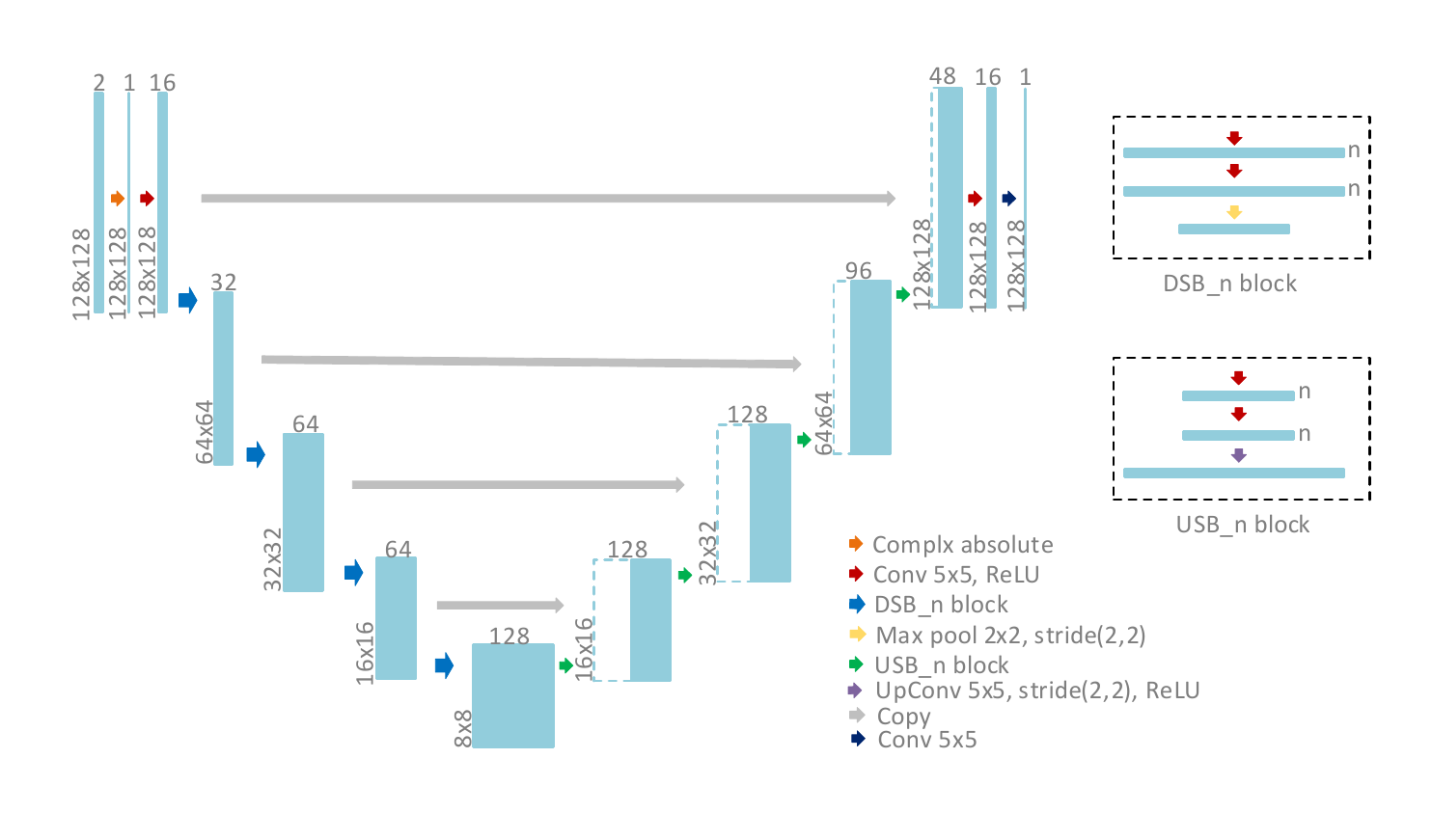}
    \caption{UNet used in our experiments.
    }
    \vspace{-1em}
    \label{fig: unet}
\end{figure}
\begin{figure}[t]
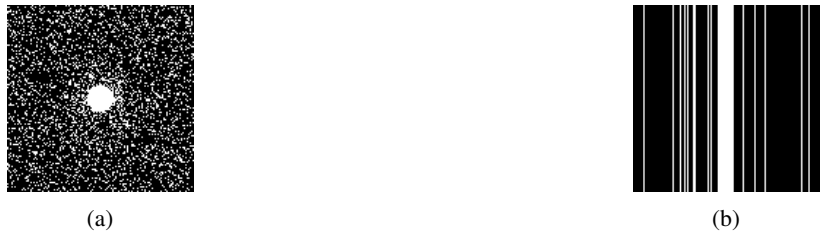

\begin{subfigure}{0.50\textwidth}
    \centering
    \includegraphics[width=0.30\linewidth]{k_space_results/circ_centre_fs/mask_s128d5r5.png}
    \caption{
    }
\label{fig: 2DMask}
\end{subfigure}
\begin{subfigure}{0.5\textwidth}
  \centering
      \includegraphics[width=0.30\textwidth]{k_space_results/rect_centre_fs/mask_s128cf8acc4.png}
    \caption{ }
\label{fig: 1Dmask}
\end{subfigure}
\caption{k-space acquisition masks (a): Random 2D 5-fold subsampling mask with the centre fully sampled.(b) Random 1D 4-fold subsampling mask.
}
\label{fig:masks}
\end{figure}
\begin{figure}[h]
    \centering
      \includegraphics[width=0.13\textwidth]{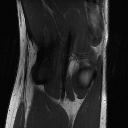} 
    \includegraphics[width=0.13\textwidth]{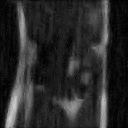}
     \includegraphics[width=0.13\textwidth]{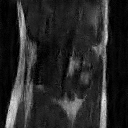} 
    \includegraphics[width=0.13\textwidth]{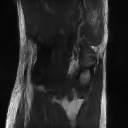} 
    \includegraphics[width=0.13\textwidth]{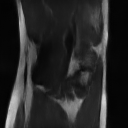}
      \includegraphics[width=0.13\textwidth]{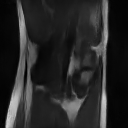}
    \includegraphics[width=0.13\textwidth]{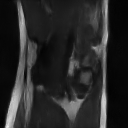}
    \\
    \vspace{0.1em}    
    \includegraphics[width=0.13\textwidth]{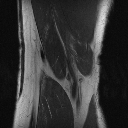} 
    \includegraphics[width=0.13\textwidth]{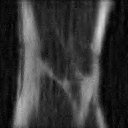}
     \includegraphics[width=0.13\textwidth]{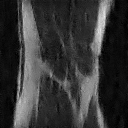} 
    \includegraphics[width=0.13\textwidth]{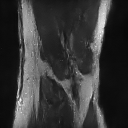} 
    \includegraphics[width=0.13\textwidth]{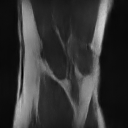}
      \includegraphics[width=0.13\textwidth]{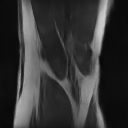}
    \includegraphics[width=0.13\textwidth]{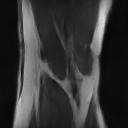}
    \\
    \vspace{0.1em}
    \includegraphics[width=0.13\textwidth]{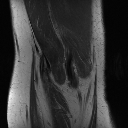} 
    \includegraphics[width=0.13\textwidth]{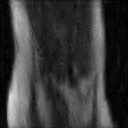}
    \includegraphics[width=0.13\textwidth]{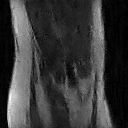} 
    \includegraphics[width=0.13\textwidth]{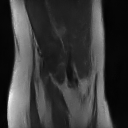} 
    \includegraphics[width=0.13\textwidth]{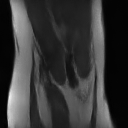}
    \includegraphics[width=0.13\textwidth]{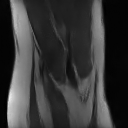}
    \includegraphics[width=0.13\textwidth]{additional_results/2DMask/ppSJASJ_predi_000125.png}
      \\
    \vspace{0.1em}    
    \includegraphics[width=0.13\textwidth]{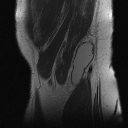} 
    \includegraphics[width=0.13\textwidth]{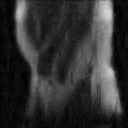}
     \includegraphics[width=0.13\textwidth]{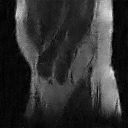} 
    \includegraphics[width=0.13\textwidth]{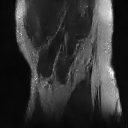} 
    \includegraphics[width=0.13\textwidth]{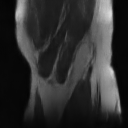}
      \includegraphics[width=0.13\textwidth]{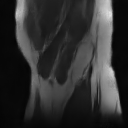}
    \includegraphics[width=0.13\textwidth]{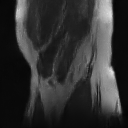}
    \\
    \vspace{0.2em}
    \caption{
        Sample results from the reconstruction from k-space subsampled measurements acquired using a 2D acquisition mask in Figure \ref{fig: 2DMask}. (Left to Right) ground truth, reconstruction using $\ell_1$ Wavelet regularization, Adversarial Regularizer \cite{lunz2018adversarial}, postprocessing using UNet \cite{jin2017deep}, UNet with FJA\&FJ and UNet with SJA\&SJ, postprocessing using UNet \cite{jin2017deep} regualrized with SJA\&SJ. 
    }
    \label{fig: mri_recon2Dmask}
    \vspace{-1em}
\end{figure}
\begin{figure}[h]
    \centering
      \includegraphics[width=0.13\textwidth]{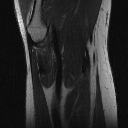} 
    \includegraphics[width=0.13\textwidth]{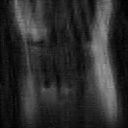}
     \includegraphics[width=0.13\textwidth]{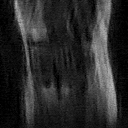} 
    \includegraphics[width=0.13\textwidth]{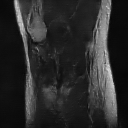} 
    \includegraphics[width=0.13\textwidth]{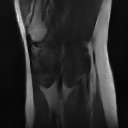}
      \includegraphics[width=0.13\textwidth]{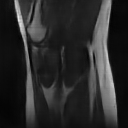}
    \includegraphics[width=0.13\textwidth]{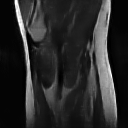}
       \\
    \vspace{0.1em}
     \includegraphics[width=0.13\textwidth]{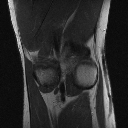} 
    \includegraphics[width=0.13\textwidth]{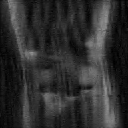}
     \includegraphics[width=0.13\textwidth]{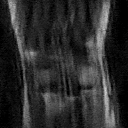} 
    \includegraphics[width=0.13\textwidth]{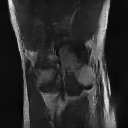} 
    \includegraphics[width=0.13\textwidth]{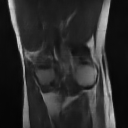}
      \includegraphics[width=0.13\textwidth]{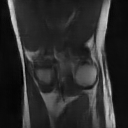}
    \includegraphics[width=0.13\textwidth]{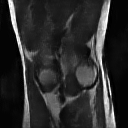}
       \\
    \vspace{0.1em}
     \includegraphics[width=0.13\textwidth]{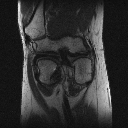} 
    \includegraphics[width=0.13\textwidth]{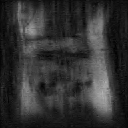}
     \includegraphics[width=0.13\textwidth]{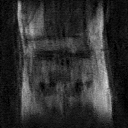}
    \includegraphics[width=0.13\textwidth]{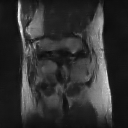}
    \includegraphics[width=0.13\textwidth]{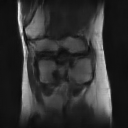}
      \includegraphics[width=0.13\textwidth]{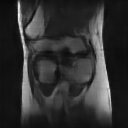}
    \includegraphics[width=0.13\textwidth]{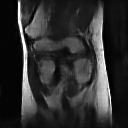}
     \\
    \vspace{0.1em}
     \includegraphics[width=0.13\textwidth]{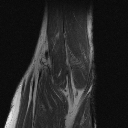} 
    \includegraphics[width=0.13\textwidth]{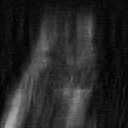}
     \includegraphics[width=0.13\textwidth]{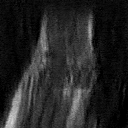}
    \includegraphics[width=0.13\textwidth]{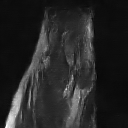}
    \includegraphics[width=0.13\textwidth]{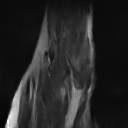}
      \includegraphics[width=0.13\textwidth]{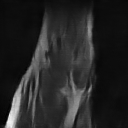}
    \includegraphics[width=0.13\textwidth]{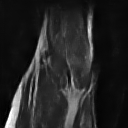}
    \\
    \vspace{0.2em}
    \caption{
        Sample results from the reconstruction from k-space subsampled measurements acquired using a 2D acquisition mask in Figure \ref{fig: 1Dmask}. (Left to Right) ground truth, reconstruction using $\ell_1$ Wavelet regularization, Adversarial Regularizer \cite{lunz2018adversarial}, postprocessing using UNet \cite{jin2017deep}, UNet with FJA\&FJ and UNet with SJA\&SJ, postprocessing using UNet \cite{jin2017deep} regualrized with SJA\&SJ. 
    }
    \label{fig: mri_recon1Dmask}
    \vspace{-1em}
\end{figure}
\end{document}